\def\mb{\mathbf}
\def\Ra{\Rightarrow}
\def\La{\Leftarrow}
\def\ra{\rightarrow}
\def\l{\langle}
\def\r{\rangle}
\def\mb{\mathbf}
\newtheorem{example}{\bf Example}[section]
\newtheorem{lemma}{\bf Lemma}[section]
\newtheorem{theorem}{\bf Theorem}[section]
\newtheorem{proposition}{\bf Proposition}[section]
\newtheorem{definition}{\bf Definition}[section]
\newtheorem{remark}{\bf Remark}[section]
\newtheorem{corollary}{\bf Corollary}[section]
\newtheorem{fact}{\bf Fact}[section]
\begin{document}

\markboth{Sl. Shtrakov and I. Damyanov}
{On the complexity of finite valued functions}

\title{On the complexity of finite valued functions}

\author{Slavcho Shtrakov}

\address{Department of Computer Science,\\ South-West University, Blagoevgrad, Bulgaria\\
%\email{shtrakov@swu.bg}
}

\author{Ivo Damyanov}

\address{Department of Computer Science,\\ South-West University, Blagoevgrad, Bulgaria\\
%\email{damianov@swu.bg}
}

\begin{abstract}
The essential variables in a finite function $f$ are defined as variables which occur in $f$ and weigh with the values of that function.
  The number of   essential  variables is an important measure of complexity for discrete functions.
   When replacing some variables in a function with constants  the resulting functions are called subfunctions,  and when replacing all essential variables in a function with constants we obtain an implementation of this function.
 Such an implementation corresponds with a path in an ordered decision diagram (ODD) of the function which connects the root with a leaf of the diagram.     The sets of essential variables in subfunctions of $f$ are called separable in $f$.
In this paper we study several properties of separable sets of variables in functions which directly impact on the number of implementations and  subfunctions in these functions.

We define equivalence relations which classify the functions of $k$-valued logic into classes with same number of implementations, subfunctions or separable sets. These relations induce three transformation groups which are compared with the lattice of all subgroups of restricted affine group (RAG). This allows us to solve several important computational and combinatorial problems. 

\end{abstract}

\keywords{Ordered decision diagram; implementation; subfunction; separable set.}
\maketitle
\section{Introduction}\label{sec1}

Understanding the complexity of $k$-valued functions is still one of the fundamental tasks
in the theory of computation. At present, besides classical methods like substitution
or degree arguments a bunch of combinatorial and algebraic techniques have been
introduced to tackle this extremely difficult problem. There has been significant progress analysing the power of
randomness and quantum bits or multiparty communication protocols that help to
capture the complexity of switching  functions. For tight estimations concerning the
basic, most simple model switching circuits there still seems a long way to go  (see \cite{comp_sem}).

In Section \ref{sec2} we introduce the basic notation and give definitions of  separable sets, subfunctions,  etc.
The properties of distributive sets of variables with their s-systems are also discussed  in Section \ref{sec2}. In Section \ref{sec3} we study the ordered decomposition trees (ODTs),  ordered decision diagrams (ODDs), and implementations of discrete functions.
We also discuss several problems with the complexity of representations of functions with respect to their ODDs, subfunctions and separable sets. In Section \ref{sec5} we  classify discrete functions by transformation groups and equivalence relations concerning the number of implementations, subfunctions and separable sets in functions. In  Section \ref{sec6} we  use these results  to classify all boolean (switching) functions with "small" number of its essential variables. Here we calculate the number of equivalence classes and cardinalities of equivalence classes of boolean functions depending on 3, 4 and 5 variables. 
\section{Separable and Distributive Sets of Variables}\label{sec2}

We start this section with basic definitions and notation.
A  discrete function is defined as a mapping: $f:A\to B$ where the domain $A={\times}_{i=1}^n A_i$                                                                                                                                          and range $B$ are non-empty finite or countable sets.

To derive the means and methods to represent, and calculate with finite valued functions, some algebraic structure is imposed on the domain $A$ and the range $B$. Both $A$ and $B$ throughout the present paper will be finite ring   of integers.

  Let $X=\{x_1,x_2,\ldots \}$ be a countable set of variables and $X_n=\{x_1,x_2,\ldots,x_n\}$ be a finite subset of $X$. Let $k$,  $k\geq 2$ be a natural number
 and let us denote by $Z_k=\{0,1,\ldots,k-1\}$ the set (ring) of
 remainders modulo $k$. The set $Z_k$ will identify the ring of residue classes $mod\ k$, i.e. $Z_k=Z/_{kZ}$, where $Z$ is the ring of all integers. An {\it $n$-ary $k$-valued  function
(operation) on $Z_k$ } is a mapping $f: Z_k^n\to Z_k$ for some natural
number $n$, called {\it the arity}  of $f$. $P_k^n$ denotes the set of all $n$-ary $k$-valued functions on $Z_k$. It is well known fact that there are $k^{k^n}$ functions in $P_k^n$. The set of all $k$-valued
functions
$P_k=\bigcup_{n=1}^\infty P_k^n$
is called {\it the algebra of $k$-valued logic}.

 All results  obtained in the present paper can be easily extended to arbitrary algebra of finite operations.

For a given  variable $x$ and $\alpha\in Z_k$, $x^\alpha$ is defined as follows:
$$
  x^\alpha=\left\{\begin{array}{ccc}
             1 \  &\  if \  &\  x=\alpha \\
             0 & if & x\neq\alpha.
           \end{array}
           \right.
$$

We   use {\it sums of
products (SP)} to  represent the functions from $P_k^n$. This is the most natural representation and it is based on so called operation tables of the functions.
Thus each function $f\in P_k^n$ can be uniquely represented of SP-form as
follows
\[    f=a_0.x_1^{0}\ldots x_n^{0}\oplus\ldots\oplus
     a_{m}.x_1^{\alpha_1}\ldots
     x_n^{\alpha_n}\oplus\ldots\oplus a_{k^n-1}.x_1^{k-1}\ldots
     x_n^{k-1}\]
with  ${\mathbf{\alpha}}={({\alpha_1,\ldots,\alpha_n})}\in Z_k^n$,  where  $m=\sum_{i=0}^n\alpha_ik^i\leq k^n-1$.  $"\oplus"$ and
$"."$ denote the operations addition (sum) and multiplication (product) modulo $k$ in
the ring $Z_k$. Then $(a_0,\ldots,a_{k^n-1})$ is the vector of output values of $f$ in its table representation.

  Let $f\in P_k^n$ and   $var(f)=\{x_1,\ldots,x_n\}$  be the set of all variables, which occur in $f$.
   We say that the $i$-th variable $x_i\in var(f)$ is  {\it  essential}  in $f$, or $f$ {\it
essentially} {\it depends} on $x_i$, if there exist values
$a_1,\ldots,a_n,b\in Z_k$, such that
\[
   f(a_1,\ldots,a_{i-1},a_{i},a_{i+1},\ldots,a_n)\neq  f(a_1,\ldots,a_{i-1},b,a_{i+1},\ldots,a_n).
\]

The set of all essential variables in the function $f$ is denoted by
$Ess(f)$ and the number of  essential variables  in $f$ is denoted by
$ess(f)=|Ess(f)|$.
 The variables from $var(f)$ which
are not essential in
 $f$ are called {\it inessential} or {\it fictive}.

 The set of all output values of a function $f\in P_k^n$ is called {\it the range} of $f$, which is denoted as follows:
 \[range(f)=\{c\in Z_k\ |\ \exists (a_1,\ldots,a_n)\in Z_k^n, \quad such\ that\quad f(a_1,\ldots,a_n)=c\}.\]
\begin{definition}\label{d1.2}
Let $x_i$  be an essential variable in $f$ and $c\in Z_k$ be a constant from $Z_k$. The
function $g=f(x_i=c)$ obtained from $f\in P_{k}^{n}$ by replacing the variable $x_i$ with $c$  is
called a {\it simple subfunction of $f$}.

When $g$ is a simple subfunction of $f$ we shall write $g\prec f$. The transitive closure of $\prec$ is denoted by $\preceq$.   $Sub(f)=\{g\ | \ g\preceq f\}$ is the set of all subfunctions of $f$ and  $sub(f)=|Sub(f)|$.
\end{definition}
 For each $m=0,1,\ldots, n$ we denote by  $sub_m(f)$  the number of subfunctions in $f$ with $m$ essential variables, i.e. $sub_m(f)=|\{g\in Sub(f)\ |\ ess(g)=m\}|$.

Let $g\preceq f$, $\mathbf{c}=(c_1,\ldots,c_m)\in Z_k^m$ and $M=\{x_1,\ldots,x_m\}\subset X$ with \[g\prec g_1\prec\ldots\prec g_m=f,\quad
g=g_1(x_1=c_1)\quad and \quad g_i=g_{i+1}(x_{i+1}=c_{i+1})\]
for $i=1,\ldots,m-1$. Then we shall write $g=f(x_1=c_1,\ldots,x_m=c_m)$ or equivalently,  $g\preceq_M^{\mathbf{c}} f$  and we shall say that the vector $\mathbf{c}$ {\it determines} the subfunction $g$ in $f$.

\begin{definition}\label{d1.3}
Let $M$  be a non-empty set of essential variables in the function $f$.
Then   $M$ is called {\it a separable set} in $f$ if there exists a subfunction $g$, $g\preceq f$ such that $M=Ess(g)$.
   $Sep(f)$ denotes the set of the all separable sets in $f$ and  $sep(f)=|Sep(f)|$.
\end{definition}

The sets of essential variables in $f$ which are not separable are called {\em inseparable} or {\em non-separable}.

For each $m= 1,\ldots, n$ we denote by  $sep_m(f)$  the number of separable sets in $f$ which consist of $m$ essential variables, i.e. $sep_m(f)=|\{M\in Sep(f)\ |\ |M|=m\}|$.
The numbers $sub(f)$ and $sep(f)$ characterize the computational complexity of the function $f$ when calculating its values. Our next goal is to classify the functions from $P_k^n$ under these complexities.
The initial and more fundamental results concerning essential variables and separable sets were obtained in the work of Y. Breitbart \cite{bre},  K. Chimev \cite{ch51}, O. Lupanov \cite{lup}, A. Salomaa \cite{sal}, and others.
~

\begin{remark}\label{r1.1} Note that if $g\preceq f$ and $x_i\notin Ess(f)$ then $x_i\notin Ess(g)$ and also if $M\notin Sep(f)$ then  $M\notin Sep(g)$.
\end{remark}

\begin{definition}\label{d2.1} Let $M$ and  $J$ be two non-empty sets of essential variables in the function $f$ such that $M\cap J=\emptyset.$
The set  $J$ is called {\it a
distributive set of  $M$ in $f$,
} if for every $|J|$-tuple of constants $\mathbf{c}$ from $Z_k$ it holds $M\not\subseteq Ess(g)$, where  $g\preceq_J^{\mathbf{c}} f$
 and $J$ is minimal  with respect to the order $\subseteq$.
  $Dis(M,f)$ denotes the set of the all distributive sets of $M$ in $f.$
\end{definition}

 It is clear that if $M\notin Sep(f)$ then $Dis(M,f)\neq\emptyset$. So, the distributive sets ``dominate'' on the inseparable sets of variables in a function. We are interested in the relationship between the structure of the distributive sets of variables and complexity of functions concerning $sep(f)$ and $sub(f)$, respectively, which is illustrated by the following example.

 \begin{example}\label{ex2.1}
  Let $k=2$,  $f=x_1x_2\oplus x_1x_3$ and $g=x_1x_2\oplus x_1^0x_3.$
It is easy to verify that the all three pairs of variables $\{x_1,x_2\}$, $\{x_1,x_3\}$ and $\{x_2,x_3\}$ are separable in $f$, but  $\{x_2,x_3\}$ is inseparable in $g$. {Figure} \ref{f1} presents graphically, separable pairs in $f$ and $g$.
The set $\{x_1\}$ is distributive of $\{x_2,x_3\}$ in $g$.
\end{example}
\begin{figure}[h]
\begin{center}
\setlength{\unitlength}{1cm}
%\special{em:linewidth 0.4pt}
%\linethickness{0.4pt}
\begin{picture}(12.00,4.00)
\thicklines
\put(1.00,2.00){\line(1,0){3.00}}
\put(1.00,2.00){\line(1,1){1.50}}
\put(4.00,2.00){\line(-1,1){1.50}}
\put(0.50,2.00){\makebox(0,0)[cc]{$x_2$}}
\put(2.50,4.00){\makebox(0,0)[cc]{$x_1$}}
\put(4.50,2.00){\makebox(0,0)[cc]{$x_3$}}

\put(6.50,2.00){\makebox(0,0)[cc]{$x_2$}}
\put(8.50,4.00){\makebox(0,0)[cc]{$x_1$}}
\put(10.50,2.00){\makebox(0,0)[cc]{$x_3$}}

\put(2.50,0.50){\makebox(0,0)[cc]{$f=x_1 x_2\oplus x_1x_3$}}
\put(8.50,0.50){\makebox(0,0)[cc]{$g=x_1x_2\oplus x_1^0x_3$}}

\put(2.50,3.50){\circle*{0.2}}
\put(4.00,2.00){\circle*{0.2}}
\put(1.00,2.00){\circle*{0.2}}

\put(8.50,3.50){\circle*{0.2}}
\put(10.00,2.00){\circle*{0.2}}
\put(7.00,2.00){\circle*{0.2}}

%\put(1.00,2.00){\line(1,0){3.00}}
\put(7.00,2.00){\line(1,1){1.50}}
\put(10.00,2.00){\line(-1,1){1.50}}
\end{picture}
\end{center}
 \caption{Separable pairs.}\label{f1}
\end{figure}
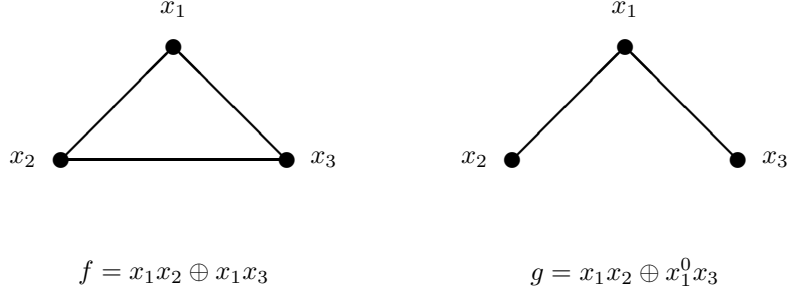

\begin{definition}\label{d2.2}
Let $\mathcal F=\{P_1,\ldots,P_m\}$ be a family of non-empty sets. A set  $\beta=\{x_1,\ldots,x_p\}$ is called
{\it an $s$-system} of  $\mathcal F$,  if for all
$P_i\in \mathcal F$, $1\leq i\leq m$ there exists  $x_j\in\beta$ such that
$x_j\in P_i$ and for all  $x_s\in\beta$ there exists  $P_l\in
\mathcal F$ such that $\{x_s\}=P_l\cap\beta.$
$Sys(\mathcal F)$ denotes the set of the all $s$-systems of the family $\mathcal F$.
\end{definition}

Applying the results concerning  $s$-systems of distributive sets is one of the basic tools for studying inseparable pairs and inseparable sets in functions. These results are deeply discussed in \cite{ch51,s27,s23}.

\begin{theorem}\label{t2.1}\ Let $M\subseteq Ess(f)$ be a  non-empty inseparable set  of essential variables in  $f\in P_k^n$  and $\beta\in Sys(Dis(M,f))$. Then the following statements  hold:
\begin{enumerate}
 \item[(i)] $M\cup \beta \in Sep(f)$;
 \item[(ii)] $(\ \forall \alpha,\ \alpha\subseteq\beta,\ \alpha\neq\beta)\quad  M\cup \alpha\notin Sep(f)$.
\end{enumerate}
\end{theorem}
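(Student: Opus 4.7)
Part (ii) is the cleaner direction, so I will handle it first. Assume for contradiction that $M\cup\alpha\in Sep(f)$ with $\alpha\subsetneq\beta$, witnessed by some $g\preceq f$. Pick $y\in\beta\setminus\alpha$, and by condition~2 of the $s$-system definition choose $P\in Dis(M,f)$ with $P\cap\beta=\{y\}$; since $P\cap M=\emptyset$ and $P\cap\alpha\subseteq P\cap(\beta\setminus\{y\})=\emptyset$, we get $P\cap Ess(g)=\emptyset$. Thus each variable of $P$ is either already substituted in producing $g$ or inessential in $g$, so extending the substitutions by arbitrary values on the remaining variables of $P$ leaves $g$ unchanged as a function of its essential variables. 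The resulting subfunction $\tilde g$ is also a subfunction of $f'=f(P=\mathbf{e})$ obtained by further substitution, so Remark~\ref{r1.1} gives $Ess(\tilde g)\subseteq Ess(f')$, forcing $M\subseteq Ess(f')$. But the distributivity of $P$ requires $M\not\subseteq Ess(f')$ for every $\mathbf{e}$, a contradiction.

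For part (i) I build $g=f(N=\mathbf{c})$ with $N=Ess(f)\setminus(M\cup\beta)$ and aim for $Ess(g)=M\cup\beta$. Since $Ess(g)\subseteq M\cup\beta$ holds automatically, I need to find $\mathbf{c}$ satisfying both $M\subseteq Ess(g)$ and $\beta\subseteq Ess(g)$. The first is handled by the hitting property (condition~1): if every assignment on $N$ killed $M$, a minimal subset $P\subseteq N$ inheriting this killing property would lie in $Dis(M,f)$, yet $\beta\cap P\subseteq\beta\cap N=\emptyset$ would contradict $\beta$ hitting every set of $Dis(M,f)$.

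For the second requirement I use condition~2 to produce local witnesses. For each $y_s\in\beta$, fix $P_s\in Dis(M,f)$ with $P_s\cap\beta=\{y_s\}$, so $Q_s:=P_s\setminus\{y_s\}\subseteq N$. Minimality of $P_s$ yields an assignment $\mathbf{d}_s$ on $Q_s$ with $M\subseteq Ess(h_s)$ where $h_s=f(Q_s=\mathbf{d}_s)$; moreover $y_s\in Ess(h_s)$, because any extension of $\mathbf{d}_s$ by a value for $y_s$ is a full $P_s$-substitution that kills $M$, so $y_s$ must be essential in $h_s$ lest $h_s(y_s=c)=h_s$ preserve $M$. The main obstacle is fusing these local witnesses into a single $\mathbf{c}$ on $N$ when the $Q_s$'s overlap. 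I intend to resolve this by an extremal argument: among all $\mathbf{c}$ with $M\subseteq Ess(f(N=\mathbf{c}))$, pick one maximizing $|Ess(f(N=\mathbf{c}))|$; if some $y_s$ were absent, replace $\mathbf{c}|_{Q_s}$ by $\mathbf{d}_s$ and re-apply the hitting argument inside $h_s$ to choose values on $N\setminus Q_s$ that keep both $M$ and the previously captured $y_t$'s essential, contradicting maximality. The delicate bookkeeping in that swap — ensuring overlapping $Q_t$'s do not cost us already-captured essentiality — is where I expect the bulk of the technical effort.
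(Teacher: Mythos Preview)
Your argument for (ii) is correct and essentially identical to the paper's. The issue is with (i), where you are working far too hard.

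The key observation you are missing is that \emph{the very argument you gave for (ii) already finishes (i)}. Once you have found any $\mathbf{c}$ on $N=Ess(f)\setminus(M\cup\beta)$ with $M\subseteq Ess(g)$ for $g=f(N=\mathbf{c})$ --- and your hitting argument correctly produces such a $\mathbf{c}$ --- the inclusion $\beta\subseteq Ess(g)$ follows automatically, with no extremal choice and no swapping. Indeed, suppose some $y\in\beta$ were inessential in $g$. Choose $P\in Dis(M,f)$ with $P\cap\beta=\{y\}$; then $P\cap Ess(g)\subseteq P\cap((M\cup\beta)\setminus\{y\})=\emptyset$. Now your own (ii)-reasoning applies verbatim: since $P\setminus\{y\}\subseteq N$ is already substituted and $y$ is fictive in $g$, assigning any value to $y$ gives $\tilde g=g$ with $\tilde g\preceq f(P=\mathbf{e})$, whence $M\subseteq Ess(\tilde g)\subseteq Ess(f(P=\mathbf{e}))$, contradicting $P\in Dis(M,f)$. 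This is exactly how the paper proceeds.

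Your proposed extremal/swap argument is therefore unnecessary, and the ``delicate bookkeeping'' you anticipate is not established as sketched: after replacing $\mathbf{c}|_{Q_s}$ by $\mathbf{d}_s$ and then re-choosing values on $N\setminus Q_s$, you have no mechanism guaranteeing that the previously captured $y_t$'s remain essential, so the intended contradiction with maximality of $|Ess(g)|$ does not follow. As written, (i) has a genuine gap there; the fix is simply to recycle the (ii) argument rather than to repair the swap.
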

\begin{proof}
$(i)$\
First, note that $M\notin Sep(f)$ implies $|M|\geq 2$. Without loss of generality assume that
$\beta=\{x_1,\ldots,x_m\}\in Sys(Dis(M,f))$ and $M=\{x_{m+1},\ldots,x_{p}\}$
with $1\leq m<p\leq n$.
Let us denote by $L$ the following set of variables
$L=Ess(f)\setminus(M\cup\beta )=\{x_{p+1},\ldots,x_{n}\}.$

Since $\beta\in Sys(Dis(M,f))$ it follows that for each $Q\subseteq L$ we have $Q\notin Dis(M,f)$. Hence  there is a vector $\mathbf{c}=(c_{p+1},\ldots,c_n)\in Z_k^{n-p}$ such that $M\subseteq Ess(g)$ where $g\preceq_L^{\mathbf{c}} f$.

Next, we shall prove that
 $\beta\subset Ess(g).$
 For suppose this were not the case and without loss of generality, assume
$x_1\notin Ess(g)$, i.e. $g=g(x_1=d_1)$ for each
$d_1\in Z_k.$ Let  $J\in Dis(M,f)$ be a distributive set of $M$ such that
$J\cap\beta=\{x_1\}.$
The existence of the set $J$ follows because $\beta$ is an $s$-system of $Dis(M,f)$ (see Definition \ref{d2.2}). 
Since $J\cap M=\emptyset$ and
 $x_1\notin
Ess(g)$ it follows that  $J\cap Ess(g)=\emptyset.$
Now $M\subset Ess(g)$ implies that
$J\notin Dis(M,f)$, which is a contradiction. Thus we have  $x_1\in Ess(g)$ and $\beta\subset Ess(g)$.
Then
$Ess(f)\setminus L=M\cup\beta$
shows that $M\cup\beta=Ess(g)$ and hence $M\cup\beta\in
Sep(f).$

$(ii)$\ Let $\alpha$, $\alpha\subseteq\beta,\ \alpha\neq\beta$ be a proper subset of $\beta$. 
Let $x_i\in \beta\setminus \alpha$. Then $\beta\in Sys(Dis(M,f))$ implies that there 
 is a distributive set $P\in Dis(M,f)$ of $M$ such that $P\cap\beta=\{x_i\}$. Hence 
$P\cap\alpha=\emptyset$
which shows that there is an non-empty distributive set $P_1$ for $M\cup\{\alpha\}$ with  $P_1\subseteq P$. Hence  
   $M\cup\alpha\notin Sep(f)$.
 \end{proof}
\begin{corollary}\label{c2.1}
Let $\emptyset\neq M\subset Ess(f)$ and $M\notin Sep(f)$. If $\beta\in Sys(Dis(M,f))$ and $x_i\in\beta$ then
$M\setminus Ess(f(x_i=c))\neq\emptyset$ for all $c\in Z_k$.
\end{corollary}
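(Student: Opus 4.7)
The plan uses the defining property of the $s$-system $\beta$: since $x_i\in\beta$, Definition \ref{d2.2} supplies a distributive set $P\in Dis(M,f)$ with $P\cap\beta=\{x_i\}$. I would then split the argument according to whether $P=\{x_i\}$ or $|P|\geq 2$.

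If $P=\{x_i\}$, the conclusion is immediate. The ``for every $|P|$-tuple'' clause of Definition \ref{d2.1} (applied with $|P|=1$) says exactly that $M\not\subseteq Ess(f(x_i=c))$ for every $c\in Z_k$, which is $M\setminus Ess(f(x_i=c))\neq\emptyset$.

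If $|P|\geq 2$, I would argue by contradiction. Suppose some $c_0\in Z_k$ satisfies $M\subseteq Ess(f(x_i=c_0))$, and set $h=f(x_i=c_0)$, $Q=P\setminus\{x_i\}$. The distributivity of $P$ in $f$, restricted to tuples whose $x_i$-coordinate equals $c_0$, forces every substitution on $Q$ to make $M$ not fully essential in $h$. Hence $Q$ contains a minimal distributive set $Q^*\in Dis(M,h)$ of $M$ in $h$, with $Q^*\cap\beta=\emptyset$. The plan is then to lift $Q^*$ back to a distributive set of $M$ in $f$ that still avoids $\beta$, which would contradict the hitting property of $\beta\in Sys(Dis(M,f))$. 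This lifting is the main obstacle, since distributivity is not preserved under substitution in general; I expect to combine Remark \ref{r1.1} (to relate $Ess(h)$ and $Ess(f)$) with the minimality clause of Definition \ref{d2.1}, the hypothesis $M\notin Sep(f)$, and possibly Theorem \ref{t2.1} to control the separable supersets of $M$ in $f$, in order to produce the required distributive set disjoint from $\beta$.
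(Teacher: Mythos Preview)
The paper gives no proof of Corollary~\ref{c2.1}; its placement immediately after Theorem~\ref{t2.1} suggests the authors regard it as a direct consequence, but no argument is supplied. So there is nothing to compare your approach against except the bare assertion.

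Your Case~1 ($P=\{x_i\}$) is correct and is exactly the content of Definition~\ref{d2.1} for a singleton distributive set. Your worry about Case~2 is not a mere technicality: the lifting you describe cannot be carried out, because the corollary as stated is in fact false. Take $k=2$,
\[
f=x_1(1\oplus x_3\oplus x_4)\oplus x_2(x_3\oplus x_4)\in P_2^4,\qquad M=\{x_1,x_2\}.
\]
Then $Ess(f)=\{x_1,x_2,x_3,x_4\}$, and the four subfunctions $f(x_3=c_3,x_4=c_4)$ are each $x_1$ or $x_2$, so $M\notin Sep(f)$ and $\{x_3,x_4\}$ kills $M$ for every assignment. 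On the other hand $Ess(f(x_3=0))=\{x_1,x_2,x_4\}\supseteq M$ and, by symmetry, $Ess(f(x_4=0))\supseteq M$; hence neither $\{x_3\}$ nor $\{x_4\}$ is distributive, and $Dis(M,f)=\{\{x_3,x_4\}\}$. The $s$-systems of this one-element family are exactly the singletons $\{x_3\}$ and $\{x_4\}$. Taking $\beta=\{x_3\}$ and $x_i=x_3$, we have $M\setminus Ess(f(x_3=0))=\emptyset$, contradicting the corollary.

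In the language of your proposal, here $P=\{x_3,x_4\}$, $c_0=0$, $h=f(x_3=0)$, and $Q=Q^*=\{x_4\}$ is indeed a distributive set of $M$ in $h$ disjoint from $\beta$; but there is \emph{no} distributive set of $M$ in $f$ disjoint from $\beta=\{x_3\}$, since the only one is $\{x_3,x_4\}$ itself. So the lifting step you flagged as ``the main obstacle'' is genuinely impossible in general. The conclusion of the corollary holds precisely in your Case~1, i.e.\ when the distributive set $P$ with $P\cap\beta=\{x_i\}$ is already the singleton $\{x_i\}$; absent that, the statement need not hold.
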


\begin{theorem}\label{t2.2}\cite{s23}
For each finite family $\mathcal F$ of non-empty sets there exists at least one
$s-$system of $\mathcal F$.
\end{theorem}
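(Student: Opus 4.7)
The plan is to recognize that an $s$-system of $\mathcal{F}$ is precisely a minimal hitting set (transversal) of $\mathcal{F}$ under inclusion, and then to produce such a minimal hitting set by starting from the obvious large hitting set and shrinking it. The first bullet of Definition \ref{d2.2} says that $\beta$ meets every $P_i$, i.e., $\beta$ is a transversal of $\mathcal{F}$. The second bullet says that for each $x_s \in \beta$ there is some $P_l$ with $P_l \cap \beta = \{x_s\}$; equivalently, removing $x_s$ from $\beta$ leaves $P_l$ un-hit. So the two conditions together are equivalent to saying that $\beta$ is a transversal and no proper subset of $\beta$ is a transversal.

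First I would introduce the set
\[
  \mathcal{T}(\mathcal{F}) = \{\, \beta \ :\ \beta \cap P_i \neq \emptyset \text{ for all } i = 1, \ldots, m\, \}
\]
of all transversals of $\mathcal{F}$. This set is non-empty, since $\beta_0 := P_1 \cup \cdots \cup P_m$ clearly meets every $P_i$ (each $P_i$ is non-empty). Because $\beta_0$ is finite, the collection of subsets of $\beta_0$ that belong to $\mathcal{T}(\mathcal{F})$ is finite; pick among them a set $\beta$ of minimum cardinality (or any $\subseteq$-minimal one). Condition (i) of Definition \ref{d2.2} is immediate from $\beta \in \mathcal{T}(\mathcal{F})$.

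Next I would verify condition (ii). Fix any $x_s \in \beta$. By $\subseteq$-minimality of $\beta$, the set $\beta \setminus \{x_s\}$ is no longer a transversal, so there exists some $P_l \in \mathcal{F}$ with $(\beta \setminus \{x_s\}) \cap P_l = \emptyset$. Combined with $\beta \cap P_l \neq \emptyset$, this forces $\beta \cap P_l = \{x_s\}$, which is exactly the requirement in Definition \ref{d2.2}. Hence $\beta \in Sys(\mathcal{F})$.

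The argument has essentially no obstacle, since once one identifies $s$-systems with minimal transversals the existence follows from finiteness. The only subtlety worth being careful about is to remove elements one at a time (or equivalently to take a $\subseteq$-minimal element) rather than a minimum-cardinality element blindly, since the two conditions characterize minimality under inclusion and not minimum cardinality; but in finite families either choice produces a valid $s$-system.
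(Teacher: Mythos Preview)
Your argument is correct. The paper does not supply its own proof of Theorem~\ref{t2.2}; it is simply cited from \cite{s23}, so there is nothing to compare against directly. Your identification of an $s$-system with an inclusion-minimal transversal is exactly right, and indeed the paper itself records essentially this characterization (in the special case $\mathcal F=Dis(M,f)$) as Theorem~\ref{t2.3}.

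One small point of care: the hypothesis says only that the \emph{family} $\mathcal F=\{P_1,\ldots,P_m\}$ is finite, not that each $P_i$ is finite, so your starting set $\beta_0=P_1\cup\cdots\cup P_m$ need not be finite as written. This is harmless: instead choose one element $a_i\in P_i$ for each $i$ and set $\beta_0=\{a_1,\ldots,a_m\}$, which is a finite transversal, and then proceed exactly as you do to extract a $\subseteq$-minimal transversal inside it. In the paper's intended context (subsets of $Ess(f)$) every $P_i$ is finite anyway, so the issue does not arise in practice.
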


\begin{theorem}\label{t2.3}
Let $M$ be an inseparable set in $f$. A set $\beta\subset Ess(f)$ is an $s$-system of $Dis(M,f)$ if and only if $\beta\cap J\neq\emptyset$ for all $J\in Dis(M,f)$ and $\alpha\subseteq\beta,\ \alpha\neq\beta$ implies $\alpha\cap P=\emptyset$ for some $P\in Dis(M,f)$.

\end{theorem}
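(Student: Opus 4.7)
The plan is to prove the equivalence by directly unpacking Definition \ref{d2.2}. The two clauses in the definition of an $s$-system correspond rather cleanly to the two conditions in the theorem: clause (i) of Definition \ref{d2.2} (the transversal/hitting condition) matches the first condition $\beta\cap J\neq\emptyset$ for all $J\in Dis(M,f)$ verbatim, while clause (ii) (every $x_s\in\beta$ is isolated by some $P_l$) will be shown to be equivalent to the minimality requirement on proper subsets of $\beta$. In particular, I will not use the hypothesis that $M$ is inseparable in any essential way beyond guaranteeing that $Dis(M,f)$ is non-empty (as already noted in the paper before Example~\ref{ex2.1}); the statement is really a general fact about $s$-systems of families of sets.

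For the forward direction, I assume $\beta\in Sys(Dis(M,f))$. The first condition is then immediate from clause (i) of Definition \ref{d2.2}. To verify the second condition, take any proper subset $\alpha\subsetneq\beta$ and pick some $x_s\in\beta\setminus\alpha$; by clause (ii) there exists $P_l\in Dis(M,f)$ with $P_l\cap\beta=\{x_s\}$. Since $\alpha\subseteq\beta$ and $x_s\notin\alpha$, intersecting with $\alpha$ yields $P_l\cap\alpha\subseteq\{x_s\}\cap\alpha=\emptyset$, so we can take $P=P_l$.

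For the reverse direction, I assume both the hitting condition and the minimality condition. Clause (i) of Definition \ref{d2.2} is the hitting condition. For clause (ii), let $x_s\in\beta$ be arbitrary and set $\alpha:=\beta\setminus\{x_s\}$, which is a proper subset of $\beta$. By the minimality condition there exists $P_l\in Dis(M,f)$ with $\alpha\cap P_l=\emptyset$, i.e.\ $P_l\cap\beta\subseteq\{x_s\}$. The hitting condition then gives $P_l\cap\beta\neq\emptyset$, forcing $P_l\cap\beta=\{x_s\}$, which is exactly clause (ii).

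There is no real obstacle to this argument: both directions are essentially bookkeeping that translates ``each $x_s\in\beta$ is singled out by some set in the family'' into ``removing any one element of $\beta$ breaks the hitting property.'' The only mild care needed is to remember that all sets in $Dis(M,f)$ are non-empty (so that the intersections considered are well-posed) and that we do not need $\beta$ itself to be non-empty as a hypothesis, since $M\notin Sep(f)$ together with Theorem \ref{t2.2} ensures $Sys(Dis(M,f))\neq\emptyset$ in the non-degenerate situations we are interested in.
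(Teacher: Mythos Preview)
Your proof is correct. For the forward direction it coincides with the paper's argument: pick $x_s\in\beta\setminus\alpha$, use clause (ii) of Definition~\ref{d2.2} to find $P$ with $P\cap\beta=\{x_s\}$, and conclude $P\cap\alpha=\emptyset$.

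For the reverse direction, however, you take a genuinely more elementary route than the paper. The paper argues indirectly: from the hitting condition it extracts a subset $\beta'\subseteq\beta$ with $\beta'\in Sys(Dis(M,f))$, and then, assuming $\beta'\neq\beta$, applies the minimality hypothesis to get $P\in Dis(M,f)$ with $\beta'\cap P=\emptyset$; it then invokes Theorem~\ref{t2.1} (that $M\cup\beta'\in Sep(f)$) to derive a contradiction. Your argument bypasses both the extraction of $\beta'$ and the appeal to Theorem~\ref{t2.1}: you verify clause (ii) of Definition~\ref{d2.2} directly by setting $\alpha=\beta\setminus\{x_s\}$ and combining the two hypotheses. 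This makes transparent your observation that the theorem is really a general fact about $s$-systems of arbitrary finite families, independent of the separability machinery; the paper's detour through Theorem~\ref{t2.1} obscures that point, though it does tie the result back to the structural content of Section~\ref{sec2}.
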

\begin{proof}
"$\La$" Let $\beta\cap J\neq \emptyset$ for all $J\in Dis(M,f)$ and $\alpha\varsubsetneqq \beta$ implies $\alpha\cap P=\emptyset$ for some $P\in Dis(M,f)$. Since $\beta\cap J\neq \emptyset$ it follows that there is a set $\beta'$, $\beta'\subset \beta\subset Ess(f)$ and  $\beta'\in Sys(Dis(M,f))$. If we suppose that $\beta'\neq\beta$ then there is $P\in Dis(M,f)$ with $\beta'\cap P=\emptyset$.  Hence $M\cup\beta'\notin Sep(f)$ because of $P\in Dis(M\cup\beta',f)$ which contradicts Theorem \ref{t2.1}.

"$\Ra$" Let $\beta$ be an $s$-system of $Dis(M,f)$ and $\alpha\varsubsetneqq \beta$. Let $x\in \beta\setminus\alpha$ and $P\in Dis(M,f)$ be a distributive set of $M$ for which $\beta\cap P=\{x\}$. Hence $\alpha\cap P=\emptyset$ and  we have $P\in Dis(M\cup \alpha,f)$ and  $M\cup\alpha\notin Sep(f)$ which shows that $\alpha\notin Sys(Dis(M,f))$.
~~~~~

\end{proof}

\section{Ordered Decision Diagrams and Complexity of Functions}\label{sec3}
%%%%%%%%%%%%%%%%%%%%%%%%%%%%%%%%%%%%%%%%%%

The distributive sets are also important  when constructing efficient procedures for simplifying in analysis and synthesis of functional schemas. 

  In this section we discuss  {\it ordered decision diagrams} (ODDs) for the functions obtained by restrictions on their {\it ordered decomposition trees} (ODTs).

 {Figure} \ref{f2}  shows an ordered decomposition tree for the function $g=x_1x_2\oplus x_1^0x_3\in P_2^3$ from  Example \ref{ex2.1},  which essentially depends on all its three variables $x_1, x_2$ and $x_3$.
The node at the top,  labelled  $g$ - is the {\it function} node.
The nodes drawn as filled circles labelled with variable names are the {\it internal (non-terminal)} nodes, and the rectangular nodes (leaves of the tree) are the {\it terminal} nodes. The terminal nodes are labelled by the  numbers from $Z_k$. Implementation of $g$ for a given values of $x_1, x_2$ and $x_3$ consists of selecting a path from the function node to a terminal node. The label of the terminal node is the sought value. At each non-terminal node the path follows the solid edge if the variable labelling the node evaluates to $1$, and the dashed edge if the variable evaluates to $0$. In the case of $k>2$ we can use colored edges with $k$ distinct colors. 

The ordering in which the variables appear is  the same along all paths of an ODT.   {Figure} \ref{f2} shows the ODT for the function $g$ from Example \ref{ex2.1}, corresponding to the variable ordering  $x_1,x_2,x_3$ (denoted briefly as $\l 1; 2; 3\r$). It is known that for a given function $g$ and a given ordering of its essential variables there is a unique ODT.

 We extend our study to ordered decision diagrams for the functions from $P_k^n$ which were  studied by D. Miller and R. Drechsler \cite{mil1,mil2}.   
 %%%%%%%%%%%%%%%%%%%%%%%%%%%%%%%
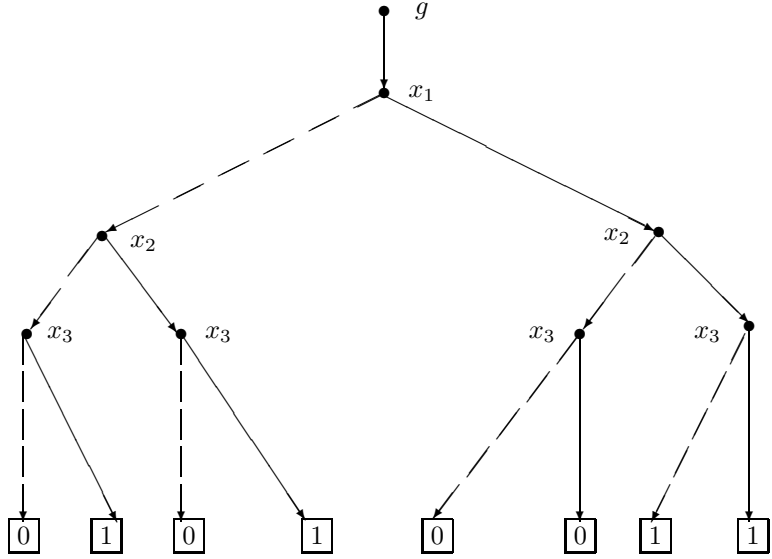
\begin{figure}%[b]
\centering

\unitlength 1mm % = 2.845pt
\linethickness{0.4pt}
\begin{picture}(115,75)(10,0)
\put(65,73){\circle*{1.5}}
\put(70,73){\makebox(0,0)[cc]{$g$}}
\put(65,73){\vector(0,-1){10.5}}

\put(65,62){\circle*{1.5}}
\put(70,62){\makebox(0,0)[cc]{$x_1$}}

\put(64.5,62){\vector(2,-1){36.5}}

%\put(65,62){\circle*{1.5}}
\put(65,62){\line(-2,-1){4}}

\put(59,59){\line(-2,-1){4}}
\put(53,56){\line(-2,-1){4}}
\put(47,53){\line(-2,-1){4}}

\put(41,50){\line(-2,-1){4}}
%\put(35,47){\line(-2,-1){4}}
%\put(29,64){\line(-2,-1){4}}
%\put(47,73){\line(-2,-1){4}}

\put(35,47){\vector(-2,-1){7}}

\put(33,42){\makebox(0,0)[cc]{$x_2$}}
\put(27.5,43){\circle*{1.5}}

\put(17.5,30){\circle*{1.5}}
\put(22,30){\makebox(0,0)[cc]{$x_3$}}

\put(27,43){\line(-3,-4){4}}
\put(22,36){\vector(-3,-4){4}}

\put(17,30){\line(0,-1){4}}
\put(17,25){\line(0,-1){4}}
\put(17,20){\line(0,-1){4}}
\put(17,15){\line(0,-1){4}}
%\put(17,10){\line(0,-1){4}}
\put(17,10){\vector(0,-1){5}}

\put(27.5,43.5){\vector(3,-4){10}}
\put(38,30){\circle*{1.5}}
\put(43,30){\makebox(0,0)[cc]{$x_3$}}
\put(38,30){\line(0,-1){4}}
\put(38,25){\line(0,-1){4}}
\put(38,20){\line(0,-1){4}}
\put(38,15){\line(0,-1){4}}
%\put(17,10){\line(0,-1){4}}
\put(38,10){\vector(0,-1){5}}

\put(17,30){\vector(1,-2){12.5}}

\put(38,30){\vector(2,-3){16.5}}

\put(96,43){\makebox(0,0)[cc]{$x_2$}}
\put(101.5,43.5){\circle*{1.5}}
\put(113.5,31){\circle*{1.5}}
\put(108,30){\makebox(0,0)[cc]{$x_3$}}

\put(101.5,43.5){\vector(1,-1){12}}
\put(113.5,30.5){\vector(0,-1){25.5}}

\put(86,30){\makebox(0,0)[cc]{$x_3$}}
\put(91,30){\circle*{1.5}}

\put(101,43){\line(-3,-4){3}}
\put(97,38){\vector(-3,-4){5.5}}

\put(91,30){\vector(0,-1){25}}

\put(91,30){\line(-3,-4){3}}
\put(87,25){\line(-3,-4){3}}
\put(83,20){\line(-3,-4){3}}
\put(79,15){\line(-3,-4){3}}
\put(75,10){\vector(-3,-4){3.5}}

\put(113,30){\line(-1,-2){2}}
\put(110.5,25){\line(-1,-2){2}}
\put(108,20){\line(-1,-2){2}}
\put(105.5,15){\line(-1,-2){2}}
\put(103,10){\vector(-1,-2){2.5}}

\put(15,2){\framebox{0}}
\put(26,2){\framebox{1}}
\put(37,2){\framebox{0}}

\put(54,2){\framebox{1}}
\put(70,2){\framebox{0}}
\put(89,2){\framebox{0}}

\put(99,2){\framebox{1}}
\put(112,2){\framebox{1}}

\end{picture}

\caption{Decomposition tree for $g=x_1x_2\oplus x_1^0x_3.$}\label{f2}
\end{figure}

An {\it ordered decision diagram} of a function $f$ is obtained from the corresponding ODT by {\it reduction} of its nodes applying of the following two rules starting from the ODT and continuing until neither rule can be applied:

{\bf Reduction rules}
\begin{enumerate}
\item[$\bullet$] If two nodes are terminal and have the same label, or are non-terminal and have the same children, they are merged.
\item[$\bullet$] If an non-terminal node has identical children it is removed from the graph and its incoming edges are redirected to the child.
\end{enumerate}

When $k=2$ ODD is called {\it a binary  decision diagram} (BDD).  BDDs are extensively used in  the theory of {\it switching circuits} to represent and manipulate Boolean functions and to measure the complexity of binary terms.

The size of the ODD is determined  both by the function being represented and the chosen ordering of the variables.
It is of crucial importance to care about variable ordering when applying ODDs in practice. The problem of finding the best variable ordering is NP-complete  (see \cite{bra}). 

 {Figure} \ref{f3} shows the BDDs for the functions from Example \ref{ex2.1} obtained from their decomposition trees  under the natural ordering of their variables - $\l 1; 2; 3\r$. The construction of the ODT for $f$ under the natural ordering of the variables is left to the reader.

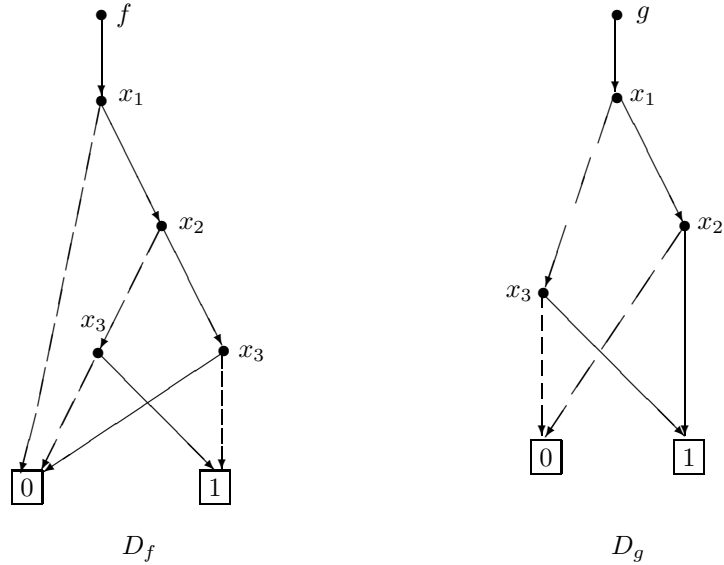
\begin{figure}[t]
\centering
\unitlength 1mm % = 2.845pt
\linethickness{0.4pt}
\begin{picture}(116.006,75.001)(0,0)
\put(25,73){\circle*{1.5}}
\put(28,73){\makebox(0,0)[cc]{$f$}}
\put(25,73){\vector(0,-1){10.75}}

\put(25,61.5){\circle*{1.5}}
\put(29,62){\makebox(0,0)[cc]{$x_1$}}

\put(24.5,62){\vector(1,-2){8.25}}

\put(33,45){\circle*{1.5}}
\put(37,45){\makebox(0,0)[cc]{$x_2$}}

\put(33,45){\vector(1,-2){8}}

\put(41.25,28.25){\circle*{1.5}}
\put(45,28){\makebox(0,0)[cc]{$x_3$}}

\put(33,45){\line(-1,-2){2}}
\put(30.5,40){\line(-1,-2){2}}
\put(28,35){\vector(-1,-2){3.25}}
\put(24.5,28){\circle*{1.5}}
\put(24,32){\makebox(0,0)[cc]{$x_3$}}

\put(24,27){\line(-1,-2){2}}
\put(21.5,22){\line(-1,-2){2}}
\put(19,16.5){\vector(-1,-2){2}}
\put(13,9){\framebox{0}}

\put(25,62){\line(-1,-5){1}}
\put(23.75,56){\line(-1,-5){1}}
\put(22.5,50){\line(-1,-5){1}}
\put(21.25,44){\line(-1,-5){1}}

\put(20,38){\line(-1,-5){1}}
\put(18.75,32){\line(-1,-5){1}}
\put(17.5,26){\line(-1,-5){1}}
%\put(16.25,12){\vector(-1,-5){3}}
\put(16.25,20){\vector(-1,-4){2}}

\put(24.5,28){\vector(1,-1){15.5}}
\put(41,28){\vector(-3,-2){23.75}}

\put(38,9){\framebox{1}}

\put(41,28){\line(0,-1){2}}
\put(41,25.5){\line(0,-1){2}}
\put(41,23){\line(0,-1){2}}
\put(41,20.5){\line(0,-1){2}}
\put(41,18){\line(0,-1){2}}
\put(41,15.5){\vector(0,-1){3}}

\put(30,2){\makebox(0,0)[cc]{$D_f$}}
%%%%%%%%%%%%%%%%%%%%%%%%%%%%%%%%%%%%%%%%%%%%%%%%%%%%%%%%%%%%%%%%%%%%
\put(93.5,73){\circle*{1.5}}
\put(97,73){\makebox(0,0)[cc]{$g$}}
\put(93.25,73){\vector(0,-1){10.25}}

\put(93.5,62){\circle*{1.5}}
\put(97,62){\makebox(0,0)[cc]{$x_1$}}

\put(94,62){\vector(1,-2){8.25}}

\put(102.5,45){\circle*{1.5}}
\put(106,45){\makebox(0,0)[cc]{$x_2$}}
\put(102.5,45){\vector(0,-1){28}}

\put(102.5,45){\line(-2,-3){3}}
\put(98.5,39){\line(-2,-3){3}}
\put(94.5,33){\line(-2,-3){3}}

\put(90.5,27){\line(-2,-3){3}}
%\put(93,17){\line(-1,-2){2}}
\put(86.75,21){\vector(-2,-3){2.75}}
%\put(88,7){\vector(-1,-2){2}}
\put(82,13){\framebox{0}}

\put(83.75,36){\circle*{1.5}}
\put(80.5,36){\makebox(0,0)[cc]{$x_3$}}
\put(83.5,36){\vector(1,-1){19}}

\put(83.5,36){\line(0,-1){2}}
\put(83.5,33){\line(0,-1){2}}
\put(83.5,30){\line(0,-1){2}}
\put(83.5,27){\line(0,-1){2}}
\put(83.5,24){\line(0,-1){2}}
\put(83.5,21){\vector(0,-1){4}}
%\put(83,8){\vector(0,-1){2}}
%\put(83,5){\vector(0,-1){2}}

\put(93,62){\line(-1,-3){2}}
\put(90,54){\line(-1,-3){2}}
\put(86.75,45){\vector(-1,-3){2.75}}

%\put(111,20){\vector(-3,-2){24}}

\put(101,13){\framebox{1}}

\put(95,2){\makebox(0,0)[cc]{$D_g$}}

\end{picture}

\caption{BDD for $f$ and $g$ under the natural ordering of variables.}\label{f3}
\end{figure}

The BDD of $f$ is more complex than the BDD of $g$. This reflects the fact that
$f$ has more separable pairs.
 Thus we have  $M=\{x_2,x_3\}\notin Sep(g)$,   $\{x_1\}\in Dis(M,g)$ and $\{x_1\}\in Sys(Dis(M,g))$. Additionally, the diagram of $g$ starts with $x_1$ - a variable which belongs to an $s$-system of $Dis(M,g)$. In this simple case we have $Sys(Dis(M,g))= Dis(M,g)=\{x_1\}$.

Figure \ref{f3} shows that when constructing the ODD of a function, it is better to start with the variables from an $s$-system of the family of  distributive sets of an inseparable set $M$ in this function. In \cite{ivo2} it is shown that the BDDs of functions have to be most simple when starting with variables from $Sys(Dis(M,f))$.
 Consequently, the inseparable sets with their distributive sets are important in   theoretical and applied computer science concerning the computational complexity of the functions.

Next, we  define and explore   complexity measures of the functions in $P_k^n$ which are directly connected with the computational complexity of functions.
We might think that the complexity of a function $f$ depends on the complexity of its ODDs. 

Let ${f\in P_k^n}$ and let $DD(f)$ be the set of the all ODDs for $f$ constructed under different variable orderings in $f$.
  
\begin{definition}\label{d3.2} 
Each path starting from the function node and finishing into a terminal node is called an {\it implementation} of the function $f$ under the given variable ordering.
 The set of the all implementations of $D_f$ we denote by $Imp(D_f)$ and 
\[Imp(f)=\bigcup_{D_f\in DD(f)}Imp(D_f).\]
 \end{definition}
 
Each implementation of the function ${f\in P_k^n}$, obtained from the diagram $D_f$ of $f$ by the non-terminal nodes 
$x_{i_1},\ldots,x_{i_r}$ and corresponding constants $c_{1},\ldots,c_{r},c\in Z_k$ with
$f(x_{i_1}=c_{1},\ldots,x_{i_r}=c_{r})=c,\quad r\leq ess(f)$, 
 can be represented as a pair $\mathbf{(i,c)}$ of two words (strings) over $\mathbf{n}=\{1,\ldots,n\}$ and $Z_k$ where $\mathbf{i}=i_1i_2\ldots i_r\in \mathbf{n}^*$ and $\mathbf{c}=c_1c_2\ldots c_rc\in Z_k^*$.
  
There is an obvious way to define a measure of complexity of a given ordered decision diagram $D_f$, namely as the number ${imp}(D_f)$ of all paths in  $D_f$ which starts from the function node and finish in a  terminal node of the diagram. 

The {\it implementation complexity} of a function ${f\in P_k^n}$ is defined as the number of all implementations of $f$, i.e.
$imp(f)=|Imp(f)|.$

We shall study also two other measures of computational complexity of functions as $sub(f)$ and $sep(f)$.

\begin{example}\label{ex3.1} Let us  consider again the functions $f$ and $g$ from Example \ref{ex2.1}, namely
$f=x_1x_2\oplus x_1x_3\quad and\quad g=x_1x_2\oplus x_1^0x_3.$

Then $(123, 1011)$ is an implementation of $f$ obtained by the diagram $D_f$  presented in
{Figure} \ref{f3}, following the path $\pi= (f; x_1: 1; x_2: 0; x_3: 1;terminal\ node: 1)$.

It is easy to see that there are six distinct BDDs for $f$ and five distinct BDDs for $g$. We shall calculate the implementations of $f$ and $g$ for the variable orderings  $\l 1; 2; 3\r$ (see Figure \ref{f3}) and $\l 2; 1; 3\r$, only. Thus for $f$ we have:
\vspace{.25cm}

\noindent
\begin{tabular}{|l|c|}
\hline
 ordering & implementations\\ \hline
   $\l 1; 2; 3\r$ &
$(1,00);\ (123,1000);\  (123,1011);\  (123,1101);\  (123, 1110)$ \\ \hline
$\l 2; 1; 3\r$& $(21,000);\  (213,0100);\ (213,0111);\ (21,100);  (213,1101);\ (213,1110)$ \\ \hline
\end{tabular}\vspace{.25cm}

For the function $g$ we obtain:\vspace{.25cm}

\noindent
\begin{tabular}{|l|c|}
\hline
 ordering &implementations\\ \hline
  $\l 1; 2; 3\r$ &
$(13,000);\ (13,011);\  (12,100);\  (12,111)$\\ \hline
$\l 2; 1; 3\r$& $(21,010);\  (213,0000);\ (213,0011);\ (213,1000);  (213,1011);\ (21,111)$\\ \hline
\end{tabular}\vspace{.25cm}

For the diagrams  in  
 {Figure} \ref{f3} we have 
${imp}(D_f)=5$ and ${imp}(D_g)=4$.

  Since $f$ is a symmetric function with respect to  $x_2$ and $x_3$ one can count  that $imp(f)=33$.
Note that the implementation  $(1,00)$ occurs in two distinct diagrams of $f$, namely under the orderings $\l 1; 2; 3\r$ and $\l 1; 3; 2\r$. Hence, it has to be counted one time and we obtain that $imp(f)$ is equal to $33$ instead of $34$.

For the function $g$, the diagrams under the  orderings $\l 1; 2; 3\r$ and $\l 1; 3; 2\r$ have the same implementations, i.e. the diagrams are identical (isomorphic). This fact is a consequence of   inseparability of the set $\{x_2,x_3\}$. Hence $g$  has five (instead of six for $f$) distinct ordered decision diagrams. Then,  one might calculate that $imp(g)=28$.

For the other two measures of complexity we obtain: 
$sub(f)=13$ because of 
$Sub(f)=\{0,1,x_1,x_2,x_3,x_2^0,x_3^0,x_2\oplus x_3,x_1x_2,$ $ x_1x_2^0,x_1x_3,x_1x_3^0, f\}$
and $sub(g)=11$ because of 
$Sub(g)=\{0,1,x_1,x_2,x_3,x_1^0,x_1 x_2,x_1^0x_3,x_1\oplus x_1^0x_3,x_1x_2\oplus x_1^0,g\}$.
Furthermore, $sep(f)=7$ because of 
 $M\in Sep(f)$ for all $M$, $\emptyset\neq M\subseteq \{x_1,x_2,x_3\}$ and  $sep(g)=6$ because of 
$Sep(g)=\{\{x_1\},\{x_2\},\{x_3\},\{x_1,x_2\},\{x_1,x_3\},\{x_1,x_2,x_3\}\}$.
\end{example}

\begin{lemma}\label{l17}
A variable $x_i$ is essential in $f\in P_k^n$ if and only if $x_i$ occurs as  a label of an non-terminal node in any ODD of $f$.
\end{lemma}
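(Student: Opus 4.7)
The plan is to prove both implications by tracking how the two reduction rules act on nodes labelled by $x_i$ in the underlying ODT. Recall that for a fixed variable ordering, the node of the ODT reached by a partial assignment $x_{j_1}=c_1, \ldots, x_{j_r}=c_r$ represents the subfunction $f(x_{j_1}=c_1, \ldots, x_{j_r}=c_r)$, and its $k$ children correspond to the $k$ possible values of the next variable in the ordering. The central observation I will use is that two ODT subtrees at the same depth collapse to the same node of the resulting ODD if and only if they represent equal subfunctions.

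For the forward direction, assume $x_i \in Ess(f)$. By the definition of essential variable there exist $a_1,\ldots,a_n,b \in Z_k$ with $f(a_1,\ldots,a_i,\ldots,a_n) \neq f(a_1,\ldots,b,\ldots,a_n)$. In any ODD of $f$ I would follow the path from the root that assigns to each variable preceding $x_i$ in the chosen ordering its value taken from this witness tuple. This path reaches a node labelled $x_i$ whose subfunctions corresponding to $x_i=a_i$ and $x_i=b$ differ (after the remaining variables are fixed according to the witness). Hence two of its children represent unequal subfunctions and are not merged by the first reduction rule; consequently the second reduction rule cannot remove this $x_i$-node either. After full reduction some non-terminal node of the ODD still carries the label $x_i$.

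For the reverse direction, I argue contrapositively. Suppose $x_i \notin Ess(f)$ and consider an arbitrary $x_i$-labelled node $v$ of the ODT; it represents a subfunction $h$ of $f$ obtained by assigning the variables that precede $x_i$ in the ordering. By Remark \ref{r1.1} we have $x_i \notin Ess(h)$, so the $k$ subfunctions $h(x_i=0),\ldots,h(x_i=k-1)$ all coincide. The $k$ children subtrees of $v$ therefore represent identical subfunctions, so the first reduction rule merges them into a single child, and the second rule then removes $v$ and redirects its incoming edges. Since $v$ was arbitrary, no $x_i$-labelled node survives reduction, so $x_i$ does not occur as a label of any non-terminal node in any ODD of $f$.

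The main obstacle is formalising the central observation that equal subfunctions produce identical ODT subtrees and that this equality is preserved by the two rules throughout reduction. I expect to dispatch this by induction on depth measured from the leaves: terminal nodes with equal values merge by the first rule, and once all children of two nodes at level $d$ have been identified, those nodes themselves become candidates for merging or elimination, which propagates the identification upwards. With this bookkeeping in place, the two implications reduce to the local analyses above.
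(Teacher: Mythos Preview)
Your proposal is correct and follows essentially the same approach as the paper. The only cosmetic difference is that for the forward implication the paper argues contrapositively (if $x_i$ is absent from some ODD then the output never depends on $x_i$), whereas you argue directly by exhibiting a surviving $x_i$-node along the witness path; your reverse direction and the paper's coincide, and your explicit treatment of the ``central observation'' (ODD nodes $\leftrightarrow$ distinct subfunctions) makes rigorous what the paper leaves implicit.
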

\begin{proof}
$"\Ra"$
Let us assume that $x_i$ does not occur as a label of any non-terminal node in an ordered decision diagram $D_f$ of $f$. Since all values of the function $f$ can be obtained by traversal walk-trough all paths in $D_f$ from function node to leaf nodes this will mean that $x_i$ will not affect the function value and hence $x_i$ is an inessential variable in $f$.

%%%%%
$"\La"$ Let $x_i\notin Ess(f)$ be an inessential variable in $f$. It is obvious  that for each subfunction $g$ of $f$ we have $x_i\notin Ess(g)$. Then we have $f(x_i=c)=f(x_i=d)$  for all $c,d\in Z_k$. Consequently, if there is  a non-terminal node labelled by $x_i$ in an ODT of $f$ then   its children have to be identical, which shows that this node has to be removed from the ODT, according to the reduction rules, given above.   
\end{proof}

An essential variable $x_i$ in a function $f$ is called {\it a strongly essential variable } in $f$ if there is a constant $c\in Z_k$ such that $Ess(f(x_i=c))=Ess(f)\setminus\{x_i\}$.

\begin{fact}\label{fc1}
If $ess(f)\geq 1$ then there is at least one strongly essential variable in $f$.
\end{fact}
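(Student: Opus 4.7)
The plan is to induct on $n = ess(f) \geq 1$. The base case $n = 1$ is immediate: if $Ess(f) = \{x_i\}$, then by Remark~\ref{r1.1} every other variable is inessential in every subfunction, so $Ess(f(x_i = c)) = \emptyset = Ess(f) \setminus \{x_i\}$ for each $c \in Z_k$, and $x_i$ is strongly essential.

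For the inductive step $n \geq 2$, I would first reformulate the claim as a separability statement: a variable $x_i$ is strongly essential in $f$ iff $Ess(f) \setminus \{x_i\} \in Sep(f)$. The forward implication is clear. For the reverse, if a subfunction $g \preceq f$ realizes $Ess(g) = Ess(f) \setminus \{x_i\}$, then the set $J$ of substituted variables satisfies $J \cap Ess(g) = \emptyset$ and $J \subseteq Ess(f)$, forcing $J \subseteq \{x_i\}$; since $g \neq f$ we have $J = \{x_i\}$, so $g = f(x_i = c)$ for some $c$. I then argue by contradiction, assuming every $M_i := Ess(f) \setminus \{x_i\}$ is inseparable in $f$. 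It is clear (cf.\ the remark preceding Example~\ref{ex2.1}) that then $Dis(M_i, f) \neq \emptyset$; any distributive set of $M_i$ is non-empty, disjoint from $M_i$, and contained in $Ess(f)$, so it must equal $\{x_i\}$, giving $Dis(M_i, f) = \{\{x_i\}\}$ for every $i$.

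Next, I would pick any $x_i \in Ess(f)$ and $c^* \in Z_k$ maximizing $m := ess(f(x_i = c^*))$. Under the assumption $m \leq n - 2$, and one verifies $m \geq 1$ (otherwise $f(x_i = c)$ would be constant for every $c$, forcing $Ess(f) = \{x_i\}$ and contradicting $n \geq 2$). Set $h = f(x_i = c^*)$. The inductive hypothesis applied to $h$ yields a strongly essential variable $x_k \in Ess(h)$ with some value $d$, so $Ess(h(x_k = d)) = Ess(h) \setminus \{x_k\}$. Using the identity $f(x_k = d)(x_i = c^*) = h(x_k = d)$ together with Remark~\ref{r1.1}, one obtains $Ess(h) \setminus \{x_k\} \subseteq Ess(f(x_k = d))$, and by maximality of $m$, $ess(f(x_k = d)) \in \{m - 1, m\}$. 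Hence $f(x_k = d)$ is a further simple subfunction of $f$ whose essential set is tightly constrained.

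The main obstacle I anticipate is closing the contradiction from this configuration. My plan is to iterate the lifting above across the max-$ess$ simple subfunctions, tracking how essential sets evolve under the double-substitution identity, and to combine the resulting constraints with $Dis(M_j, f) = \{\{x_j\}\}$ for every $j$ and the identity $\bigcup_{c \in Z_k} Ess(f(x_j = c)) = Ess(f) \setminus \{x_j\}$ (which follows directly from the definition of essentiality). Since $f$ has only finitely many simple subfunctions, the iteration must terminate, and the terminating configuration should violate one of these constraints, producing the desired contradiction and hence the existence of a strongly essential variable in $f$.
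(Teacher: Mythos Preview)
The paper gives no proof of Fact~\ref{fc1}; it simply attributes the result to Lupanov (Boolean case) and Salomaa (general $k$). So there is no in-paper argument to compare against, and your proposal has to be judged on its own.

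Your setup is sound: the base case is correct, the reformulation ``$x_i$ strongly essential $\iff Ess(f)\setminus\{x_i\}\in Sep(f)$'' is correct with the justification you give, the identification $Dis(M_i,f)=\{\{x_i\}\}$ under the contradiction hypothesis is correct, and applying the inductive hypothesis to a simple subfunction $h=f(x_i=c^*)$ of maximal $ess$ is legitimate. But the proof has a genuine gap precisely where you flag it. The ``iteration'' you sketch does not close. Write $M=Ess(h)$ and let $x_k\in M$, $d\in Z_k$ come from the inductive step. One checks (a sharpening you did not record) that if $ess(f(x_k=d))=m$ then necessarily $Ess(f(x_k=d))=(M\setminus\{x_k\})\cup\{x_i\}$: any other extra variable $x_l$ would force $x_i\notin Ess(f(x_k=d))$, hence $f(x_k=d)=f(x_k=d)(x_i=c^*)=h(x_k=d)$, a contradiction. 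But then $x_i$ is itself strongly essential in $f(x_k=d)$ with witness $c^*$, and the next step of the iteration returns you to $h$ --- a $2$-cycle. More generally, every essential set produced by the iteration stays inside $M\cup\{x_i\}$, a set of size $m+1\le n-1$, so the remaining variables of $Ess(f)$ and the coverage identity $\bigcup_{c} Ess(f(x_j=c))=Ess(f)\setminus\{x_j\}$ are never engaged. Your appeal to finiteness is also off: a finite state space forces eventual \emph{cycling}, not termination, and a cycle is not a contradiction.

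In short, the ingredients you have assembled do not by themselves produce the contradiction; an additional idea is required. The classical arguments of Lupanov and Salomaa proceed differently and do not go through this maximal-subfunction iteration.
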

This fact was  proven by O. Lupanov \cite{lup} in case of Boolean functions and by A. Salomaa  \cite{sal} for arbitrary functions.
 Later, Y. Breitbart \cite{bre} and  K. Chimev \cite{ch51}   proved that if $ess(f)\geq 2$ then there exist at least two strongly essential variables in $f$.
 We need Fact \ref{fc1} to prove the next important theorem.

\begin{theorem}\label{t3.2} A non-empty set $M$ of essential variables  is separable in $f$ if and only 
 if there exists  an implementation $\mathbf{(j,c)}$ of the form \[\mathbf{(j,c)}=(j_1j_2\ldots j_{r-m}j_{r-m+1}\ldots j_r, c_1c_2\ldots c_{r-m}c_{r-m+1}\ldots c_r c)\in Imp(f)\] where $M=\{x_{j_{r-m+1}},\ldots, x_{j_r}\}$ and $1\leq m\leq r\leq ess(f)$.
\end{theorem}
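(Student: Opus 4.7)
The plan is to prove the two implications separately, using the chain of subfunctions read off from any ODD path together with Fact~\ref{fc1}.

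For the forward direction, suppose $M\in Sep(f)$ and let $h\preceq f$ satisfy $Ess(h)=M$. Unfolding $\preceq$ via Definition~\ref{d1.2} gives a chain $f=g_0\succ g_1\succ\cdots\succ g_{r-m}=h$ in which each step fixes a variable $x_{j_i}$ essential in $g_{i-1}$. I then extend the chain by applying Fact~\ref{fc1} $m$ times to $h$: at every step a strongly essential variable exists and must belong to $Ess(h)=M$, and fixing it to its strong value shrinks the essential set by exactly that variable. This produces $h=g_{r-m}\succ g_{r-m+1}\succ\cdots\succ g_r=c$, a constant, where the last $m$ substitutions are exactly the elements of $M$. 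Taking the ODD of $f$ under any variable ordering starting with $j_1,\ldots,j_r$, Lemma~\ref{l17} ensures that each node labeled $x_{j_i}$ survives the reduction rules (since $x_{j_i}\in Ess(g_{i-1})$), so the chain lifts to a root-to-terminal path of $D_f$. This path is the required implementation whose length-$m$ suffix is precisely $M$.

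For the backward direction, given the implementation $(j_1\cdots j_r,\, c_1\cdots c_r c)$ with $M=\{x_{j_{r-m+1}},\ldots,x_{j_r}\}$, I read off the chain $f=g_0\succ g_1\succ\cdots\succ g_r=c$ from the path. By Lemma~\ref{l17}, each $x_{j_i}\in Ess(g_{i-1})$. Iterating the contrapositive of Remark~\ref{r1.1} over the last $m$ steps yields $M\subseteq Ess(g_{r-m})$. I would then take $h:=g_{r-m}$ as the candidate separable witness, and the task reduces to proving $Ess(g_{r-m})\subseteq M$. For this I plan to use Lemma~\ref{l17} together with the structure of the subtree of $D_f$ rooted at $g_{r-m}$: the path through $M$ reaches a terminal after exactly $m$ internal nodes, and the compatibility imposed by the fixed variable ordering should preclude any additional essential variable of $g_{r-m}$ outside $M$ from being hidden in that subtree.

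The main obstacle is the inclusion $Ess(g_{r-m})\subseteq M$ in the ``if'' direction, because a priori an essential variable outside $M$ could appear on a different branch of the subtree at $g_{r-m}$. If the direct ODD-structural argument proves awkward, I would fall back on a proof by contradiction: assume $M\notin Sep(f)$, pick $\beta\in Sys(Dis(M,f))$ (non-empty by Theorem~\ref{t2.2}), and use Corollary~\ref{c2.1} to show $\beta\cap\{x_{j_1},\ldots,x_{j_{r-m}}\}=\emptyset$ (any intersection would force some element of $M$ out of $Ess(g_{r-m})$ via Remark~\ref{r1.1}); then apply Theorem~\ref{t2.1} inside $g_{r-m}$ to exclude off-path elements of $\beta$ and derive the contradiction.
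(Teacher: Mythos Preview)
Your forward (``only if'') argument is essentially the paper's: reach a subfunction $h$ with $Ess(h)=M$ by a chain of simple subfunctions, then iterate Fact~\ref{fc1} inside $h$ to produce the tail, and read the whole chain as a root--to--leaf path in the ODD for the matching variable ordering. That part is fine and matches the paper's construction (the paper fixes all of $Ess(f)\setminus M$ first and then notes that some of those nodes may be skipped, which amounts to the same thing).

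The backward (``if'') direction has a gap that cannot be closed, and you have located it precisely. You set out to prove $Ess(g_{r-m})\subseteq M$, but this inclusion is false in general. Take the paper's own function $g=x_1x_2\oplus x_1^0x_3$ under the ordering $\langle 2;3;1\rangle$: since $g(x_2{=}0)=x_1^0x_3$ still depends on $x_3$ and $g(x_2{=}0,x_3{=}0)=0$, the pair $(23,000)$ is a genuine implementation with $r=m=2$, so $M=\{x_2,x_3\}$, $g_{r-m}=g$, and $Ess(g_{r-m})=\{x_1,x_2,x_3\}\supsetneq M$. Your distributive-set fallback also collapses here: one has $\beta=\{x_1\}$, the prefix $\{x_{j_1},\ldots,x_{j_{r-m}}\}$ is empty so the disjointness step is vacuous, and $x_1$ remains essential in $g_{r-m}$ on the other branches with nothing left to contradict. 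Since $\{x_2,x_3\}\notin Sep(g)$ (Example~\ref{ex2.1}), this instance actually refutes the backward implication as literally stated; the paper's one-line argument for this direction---passing from $M\subseteq Ess(g)$ directly to $M\in Sep(f)$---carries exactly the same defect.
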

\begin{proof}
"$\La$" Let \[\mathbf{(j,c)}=(j_1\ldots j_{r-m}j_{r-m+1}\ldots j_r, c_1\ldots c_{r-m}c_{r-m+1}\ldots c_rc)\in Imp(f)\] be an implementation of $f$ and let $M=\{x_{j_{r-m+1}},\ldots, x_{j_r}\}$. Hence the all variables from $\{x_{j_{r-m+1}},\ldots,x_{j_r}\}$ are essential in the following subfunction of $f$ 
\[g=f(x_{j_1}=c_{1},\ldots,x_{j_{r-m}}=c_{{r-m}})\] which shows that $M\in Sep(f)$.

"$\Ra$"  Without loss of generality let us assume that  $M=\{x_1,\ldots,x_m\}$ is a non-empty separable set in $f$ and $n=ess(f)$. Then there are constants $d_{m+1},\ldots,d_n\in Z_k$ such that $M=Ess(h)$ where $h=f(x_{m+1}=d_{m+1},\ldots,x_{n}=d_{n})$. From Fact  \ref{fc1} it follows that there is a variable $x_{i_1}\in M$ and a constant $d_{1}\in Z_k$ such that $Ess(h_1)=M\setminus \{x_{i_1}\}$ where $h_1=h(x_{i_1}=d_{1})$. Consequently,  we might inductively obtain that there are variables $x_{i_r}\in M$ and constants $d_{r}\in Z_k$  for $r=2,\ldots,m$, such that  $Ess(h_r)=M\setminus \{x_{i_1},\ldots,x_{i_r}\}$ where $h_r=h_{r-1}(x_{i_r}=d_{r})$. Hence, the string $m+1m+2\ldots n$ has a substring $j_1\ldots j_s$ such that   $(j_1\ldots j_si_1\ldots i_m, d_{j_1}\ldots d_{j_s}d_{1}\ldots d_{m}d)$ is an implementation of $f$ with $M=\{x_{i_1},\ldots,x_{i_m}\}$ and $d=h_m$.
\end{proof}
\begin{corollary}\label{c3.2} For each variable $x_i\in Ess(f)$ there is an implementation $\mathbf{(j,c)}$ of $f$ whose last letter of $\mathbf{j}$ is $i$, i.e. $\mathbf{(j,c)}=(j_1\ldots j_{m-1}i, c_{j_1}\ldots c_{j_{m-1}}c_ic)\in Imp(f)$, $m\leq ess(f)$.
\end{corollary}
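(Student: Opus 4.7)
The corollary is an immediate consequence of Theorem \ref{t3.2} applied to the singleton set $M=\{x_i\}$, so the only substantive work is checking that $\{x_i\}$ is actually a separable set of $f$ whenever $x_i\in Ess(f)$. My plan is to verify this reduction step explicitly and then quote Theorem \ref{t3.2}.

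First I would argue that $\{x_i\}\in Sep(f)$. Since $x_i\in Ess(f)$, the definition of essentiality provides constants $a_1,\ldots,a_n,b\in Z_k$ such that
\[
f(a_1,\ldots,a_{i-1},a_i,a_{i+1},\ldots,a_n)\neq f(a_1,\ldots,a_{i-1},b,a_{i+1},\ldots,a_n).
\]
Setting $g=f(x_1=a_1,\ldots,x_{i-1}=a_{i-1},x_{i+1}=a_{i+1},\ldots,x_n=a_n)$, the inequality above says exactly that $g(a_i)\neq g(b)$, so $x_i\in Ess(g)$. Since $g$ has at most the single variable $x_i$ in its support, $Ess(g)=\{x_i\}$, and therefore $\{x_i\}\in Sep(f)$ by Definition \ref{d1.3}.

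Next I would apply Theorem \ref{t3.2} to the separable set $M=\{x_i\}$. In the notation of that theorem we have $|M|=1$, so it produces an implementation
\[
\mathbf{(j,c)}=(j_1 j_2\ldots j_{r-1}j_r,\; c_1 c_2\ldots c_{r-1}c_r c)\in Imp(f)
\]
with $1\le r\le ess(f)$ and $\{x_{j_r}\}=M=\{x_i\}$, i.e.\ $j_r=i$. Renaming the length $r$ as $m$ gives exactly the form $\mathbf{(j,c)}=(j_1\ldots j_{m-1}i,\; c_{j_1}\ldots c_{j_{m-1}}c_i c)$ with $m\le ess(f)$ demanded in the statement.

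There is no real obstacle here, since Theorem \ref{t3.2} does all the heavy lifting (its proof is where Fact \ref{fc1} was needed to peel off essential variables one at a time and thereby realise $M$ as the tail of some implementation). The only thing worth being careful about is the notational mismatch between the theorem and the corollary: in Theorem \ref{t3.2} the letter $m$ denotes $|M|$ while in the corollary it denotes the length of the implementation. Once one fixes $|M|=1$ this mismatch disappears, and the corollary follows.
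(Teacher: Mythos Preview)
Your proof is correct and matches the paper's intended argument: the corollary is stated immediately after Theorem~\ref{t3.2} with no separate proof, so the implicit reasoning is exactly to take $M=\{x_i\}$, observe that any essential variable forms a separable singleton, and invoke the theorem. Your explicit verification of $\{x_i\}\in Sep(f)$ and your remark about the notational clash in the use of $m$ are both accurate and appropriate.
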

Note that there exists an ODD of a function whose non-terminal nodes are labelled by the variables from a given set, but this set might not be separable. For instance, the implementation $(231,0101)\in Imp(g)$ of the function $g$ from Example \ref{ex3.1} shows that the variables from the set  $M=\{x_2,x_3\}$  occur as labels of the starting two non-terminal nodes  in the BDD of $g$ under the ordering $\l 2; 3; 1 \r$, but $M\notin Sep(g)$.
%%%%%%%%%%%%%%%%%%%%%%%%%%%%%%%%%%%%%%%%%%%%%%%%%%%%%%%%%%%%%%%%%%%

 \begin{lemma}\label{l2.1}
If $ess(f)=n$, $g\preceq f$  with
$ess(g)= m<n$ then there exists a variable $x_t\in Ess(f)\setminus Ess(g)$ such that $Ess(g)\cup\{x_t\}\in Sep(f)$.
 \end{lemma}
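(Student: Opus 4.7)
The plan is to prove Lemma 2.1 by induction on $n-m$. The base case $n-m=1$ is immediate, since then $Ess(f)=M\cup\{x_t\}$ for the unique $x_t\in Ess(f)\setminus M$, and $f\preceq f$ gives $M\cup\{x_t\}\in Sep(f)$.

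For the inductive step $n-m\geq 2$, I would first reduce to the case where $g$ is a simple subfunction of $f$. Given any chain $f=f_0\succ f_1\succ\ldots\succ f_s=g$ of simple subfunctions, if $s\geq 2$ then $f_1$ is a proper simple subfunction of $f$ with $g\preceq f_1$ and $m\leq ess(f_1)\leq n-1$. If $ess(f_1)>m$, the inductive hypothesis applied to the pair $(f_1,g)$ (using $Ess(f_1)\subseteq Ess(f)$ from Remark \ref{r1.1} and $Sep(f_1)\subseteq Sep(f)$) delivers the required $x_t$. If $ess(f_1)=m$, the containment $M\subseteq Ess(f_1)$ forces $Ess(f_1)=M$, so replacing $g$ by $f_1$ reduces to $s=1$.

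The substantive case is $s=1$: $g=f(x_{i_1}=c_1)$ with $Ess(g)=M$ and $x_{i_1}\in Ess(f)\setminus M$. I would show that $x_t=x_{i_1}$ itself works. Let $T':=Ess(f)\setminus(M\cup\{x_{i_1}\})$ and write $\mathbf{y},\mathbf{z}$ for tuples indexed by $M$ and $T'$ respectively. The hypothesis $Ess(f(x_{i_1}=c_1))=M$ forces $f(c_1,\mathbf{y},\mathbf{z})=\phi(\mathbf{y})$, independent of $\mathbf{z}$, with $Ess(\phi)=M$. Setting $h:=f(\mathbf{z}=\mathbf{d})$ for an assignment $\mathbf{d}$ of $T'$: each variable of $M$ is automatically essential in $h$ (inherited from $\phi$ via the slice $x_{i_1}=c_1$), and $Ess(h)\subseteq M\cup\{x_{i_1}\}$ since the remaining variables are either substituted or fictive in $f$. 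It remains to choose $\mathbf{d}$ so that $x_{i_1}$ is essential in $h$. Since $x_{i_1}\in Ess(f)$, there exist $\mathbf{y}_0,\mathbf{z}_0$ and $a,b\in Z_k$ with $f(a,\mathbf{y}_0,\mathbf{z}_0)\neq f(b,\mathbf{y}_0,\mathbf{z}_0)$; because $\phi(\mathbf{y}_0)=f(c_1,\mathbf{y}_0,\mathbf{z}_0)$ can equal at most one of these two values, some $b'\in\{a,b\}$ satisfies $b'\neq c_1$ and $f(c_1,\mathbf{y}_0,\mathbf{z}_0)\neq f(b',\mathbf{y}_0,\mathbf{z}_0)$. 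Taking $\mathbf{d}:=\mathbf{z}_0$ then makes $x_{i_1}$ essential in $h$, so $Ess(h)=M\cup\{x_{i_1}\}\in Sep(f)$.

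The main obstacle is the case $s=1$: a single substitution can simultaneously kill many essential variables, and the key insight is that the substituted variable $x_{i_1}$ is itself the right candidate for $x_t$, with the essentiality of $x_{i_1}$ in $f$ directly providing the witnessing assignment $\mathbf{d}$ of the remaining $T'$-variables.
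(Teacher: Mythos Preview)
Your proof is correct and takes a genuinely different route from the paper's. The paper proves this lemma via the implementation machinery of Section~\ref{sec3}: since $M=Ess(g)\in Sep(f)$, Theorem~\ref{t3.2} produces an implementation $(j_1\ldots j_{r-m}j_{r-m+1}\ldots j_r,\,c_1\ldots c_rc)\in Imp(f)$ whose last $m$ labels are exactly $M$; because $m<n$ the prefix is nonempty, and dropping one of the prefix substitutions yields a subfunction $h$ with $Ess(h)=M\cup\{x_{j_i}\}$.

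Your argument avoids ODDs and implementations entirely, working directly from the definitions by induction on $n-m$ and reducing to the simple-subfunction case $g=f(x_{i_1}=c_1)$, where you show that the substituted variable $x_{i_1}$ itself extends $M$ to a separable set. The paper's approach ties the lemma into its ODD narrative and makes the passage to Chimev's Theorem~\ref{t2.4} immediate; your approach is more elementary and shows the result is logically independent of Theorem~\ref{t3.2} and Lemma~\ref{l17}. One point worth making explicit for full rigor: since Definition~\ref{d1.2} requires each successive substitution to hit an \emph{essential} variable, writing $h=f(\mathbf{z}=\mathbf{z}_0)$ as a genuine element of $Sub(f)$ needs the remark that any $T'$-variable which becomes inessential along the chain can simply be skipped without changing the resulting function, so some $h'\preceq f$ with $Ess(h')=Ess(h)$ always exists. (The paper's own proof treats the analogous point just as informally.)
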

\begin{proof}
Let $M=Ess(g)$. Then $M\in Sep(f)$ and from Theorem \ref{t3.2} it follows that there is an implementation $\mathbf{(j,c)}$ of the form $\mathbf{(j,c)}=(j_1j_2\ldots j_{r-m}j_{r-m+1}\ldots j_r,$ $c_1c_2\ldots c_{r-m}c_{r-m+1}\ldots c_r c)\in Imp(f)$ where $M=\{x_{j_{r-m+1}},\ldots, x_{j_r}\}$ and $1\leq m\leq r\leq ess(f)$. Since $m<n$ it follows that $r-m>0$ and Lemma \ref{l17} shows that there is $x_{j_i}\in Ess(h)$ where \[h=f(x_{j_1}=c_1,\ldots,x_{j_{i-1}}=c_{i-1},x_{j_{i+1}}=c_{i+1},\ldots x_{j_{r-m}}=c_{r-m}).\] It is easy to see that $Ess(h)=M\cup\{x_{j_i}\}$.
\end{proof}

Now, as an immediate consequence of the above lemma we obtain Theorem \ref{t2.4} which
was inductively proven by K. Chimev. 
\begin{theorem}\label{t2.4}  \cite{ch51}
If $ess(f)=n$, $g\preceq f$  with
$ess(g)= m\leq
n$ then there exist $n-m$ subfunctions $g_1,\ldots,g_{n-m}$ such that
\[g \prec g_1 \prec g_2\prec \ldots\prec g_{n-m}= f\]
and $ess(g_i)=m+i$ for $i=1,\ldots,n-m$.
\end{theorem}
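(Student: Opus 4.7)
The plan is to induct on $d = n - m$. The base case $d = 0$ is immediate: since every application of $\prec$ strictly decreases the number of essential variables, $g \preceq f$ with $ess(g) = ess(f) = n$ forces $g = f$, and the empty chain works.

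For the inductive step, assume the claim holds for $n - m - 1$ and fix $g \preceq f$ with $ess(g) = m < n$. Apply Lemma \ref{l2.1} to obtain a variable $x_t \in Ess(f) \setminus Ess(g)$ with $Ess(g) \cup \{x_t\} \in Sep(f)$. The extra content I need is that the subfunction $h$ realizing this separable set can be chosen so that $g \prec h$; this is exactly what the proof of Lemma \ref{l2.1} delivers (invoking Theorem \ref{t3.2} with $g$ as the witness, it constructs $h$ by deleting a single fixing $x_t = c_t$ from the prefix of the implementation producing $g$, so $g = h(x_t = c_t)$). Thus $g \prec h \preceq f$ with $ess(h) = m + 1$. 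Set $g_1 := h$.

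Invoking the inductive hypothesis on $h \preceq f$ with $ess(h) = m + 1$ and $n - (m + 1) = n - m - 1$ gives a chain $h \prec h_1 \prec \ldots \prec h_{n - m - 1} = f$ with $ess(h_i) = m + 1 + i$. Setting $g_{i+1} := h_i$ for $i = 1, \ldots, n - m - 1$ and prepending $g \prec g_1$ yields the required $g \prec g_1 \prec g_2 \prec \ldots \prec g_{n - m} = f$ with $ess(g_i) = m + i$.

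The main delicate point is the implicit upgrade of Lemma \ref{l2.1}: its written statement only claims $Ess(g) \cup \{x_t\} \in Sep(f)$, but its proof actually produces a witness $h$ for which $g$ is a simple subfunction of $h$. Once that upgrade is unpacked, the rest of the argument is a clean induction.
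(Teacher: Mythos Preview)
Your proof is correct and follows essentially the same route as the paper: the paper states only that Theorem~\ref{t2.4} is an ``immediate consequence'' of Lemma~\ref{l2.1}, and your induction on $n-m$ is exactly the natural way to cash that out. You are also right to flag the subtlety that Lemma~\ref{l2.1} as stated only yields $Ess(g)\cup\{x_t\}\in Sep(f)$, whereas the induction needs an $h\preceq f$ with $ess(h)=m+1$ and $g\prec h$; tracing the proof of Lemma~\ref{l2.1} (via Theorem~\ref{t3.2}) with $g$ itself as the separability witness indeed produces such an $h$, since then the prefix of the implementation yields $g$ and deleting its last fixing gives $h$ with $g=h(x_{j_i}=c_i)$.
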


%%%%%%%%%%%%%%%%%%%%%%%%%%%%%%%%%%%%%%

 The {\it depth}, (denoted by   $Depth(D_f)$) of an ordered decision diagram $D_f$ for a function $f$ is defined as  the number of the edges in a longest path from the function node in $D_f$ to a leaf of $D_f$.

Thus for the diagrams in Figure \ref{f3} we have $Depth(D_f)=4$ and $Depth(D_g)=3$.

Clearly, if $ess(f)=n$ then $Depth(D_f)\leq n+1$ for all ODDs of $f$.

\begin{theorem}\label{l3.2}
If $ess(f)=n\geq 1$ then there is an ordered decision diagram $D_f$ of $f$ with $Depth(D_f)=n+1$.
\end{theorem}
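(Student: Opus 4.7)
The plan is to construct an ordering and exhibit a specific root-to-leaf path in the resulting ODD of length exactly $n+1$. Combined with the upper bound $Depth(D_f)\le n+1$ noted just above the statement, this yields the claimed equality.

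First I would iterate Fact \ref{fc1} to peel off essential variables one at a time. Starting from $h_0=f$ with $ess(h_0)=n\ge 1$, Fact \ref{fc1} produces a strongly essential variable $x_{i_1}\in Ess(h_0)$ and a constant $c_1\in Z_k$ such that $Ess(h_1)=Ess(h_0)\setminus\{x_{i_1}\}$, where $h_1=h_0(x_{i_1}=c_1)$. Since $ess(h_1)=n-1$, Fact \ref{fc1} applies again, and iterating $n$ times I obtain a chain $h_0\succ h_1\succ\cdots\succ h_n$ with $h_t=h_{t-1}(x_{i_t}=c_t)$, $x_{i_t}$ strongly essential in $h_{t-1}$, and $Ess(h_t)=Ess(f)\setminus\{x_{i_1},\ldots,x_{i_t}\}$; in particular $h_n$ is a constant $d\in Z_k$.

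Next I would build the ODT of $f$ under the ordering $\l i_1;i_2;\ldots;i_n\r$ and single out the path $\pi$ determined by the values $(c_1,\ldots,c_n)$. In the ODT, $\pi$ consists of the function node, the $n$ internal nodes labelled $x_{i_1},\ldots,x_{i_n}$, and the terminal leaf labelled $d$, so it has length $n+1$.

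The hard part, and really the only delicate point, is checking that none of these $n$ internal nodes is eliminated when the reduction rules are applied to form $D_f$. The node of $\pi$ at depth $t$ corresponds to the subfunction $h_{t-1}$, and under the chosen ordering its children are the canonical subdiagrams of $h_{t-1}(x_{i_t}=c)$ for $c\in Z_k$. Merging via the second reduction rule only identifies the node on $\pi$ with other nodes representing the same subfunction and hence preserves path length. The third rule would collapse the node only if all its children coincided, which by Lemma \ref{l17} would force $x_{i_t}\notin Ess(h_{t-1})$, contradicting the choice of $x_{i_t}$ as strongly essential in $h_{t-1}$. Hence $\pi$ survives intact in $D_f$ and supplies a root-to-leaf path of length $n+1$, giving $Depth(D_f)\ge n+1$ and, combined with the known upper bound, $Depth(D_f)=n+1$.
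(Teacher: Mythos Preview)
Your argument is correct and essentially mirrors the paper's: both build a chain of subfunctions losing exactly one essential variable at each step (the paper via Theorem~\ref{t2.4}, you directly via iterated Fact~\ref{fc1}) and then observe that under the resulting variable ordering the corresponding root-to-leaf path cannot be shortened by the reduction rules. One small slip: you speak of the ``second'' and ``third'' reduction rules, but the paper lists only two---your merging rule is the first and your collapsing rule is the second.
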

\begin{proof}
Let $Ess(f)=\{x_1,\ldots,x_n\}$, $n\geq 1$. Since $x_1$ is an essential variable it follows that $\{x_1\}\in Sep(f)$. Theorem \ref{t2.4} implies that there is an ordering $\l i_1; i_2; \ldots; i_{n-1}\r$  of the rest variables $x_{2}, \ldots, x_{{n}}$ such that for each $j$, $1\leq j\leq n-1$ we have $g_j\prec_J^{\mathbf{c}} f$ where $J=\{x_{i_1},\ldots,x_{i_j}\}$, $\mathbf{c}\in Z_k^{J}$ and $Ess(g_j)=\{x_1,x_{i_{j+1}},\ldots,x_{i_{n-1}}\}$. This shows that the all variables from  $J$ have to be labels of  non-terminal nodes in a path $\pi$ of the ordered decision diagram $D_f$ of $f$ under the  variable ordering $\l i_1; i_2; \ldots; i_{n-1}; 1\r$. Hence $\pi$ has to contain all essential variables in $f$ as labels at the non-terminal nodes of $\pi$. Hence $Depth(D_f)=n+1$.
\end{proof}

%\vspace{.05cm}

\begin{theorem}\label{t3.3}
Let $f\in P_k^n$ and $Ess(f)=\{x_1,\ldots,x_n\}$, $n\geq 1$. If $M\neq \emptyset$, $M\subset Ess(f)$ and $M\notin Sep(f)$ then there is a  decision diagram $D_f$ of $f$ with $Depth(D_f)<n+1$.
\end{theorem}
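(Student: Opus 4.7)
The plan is to construct a specific ODD by placing at the top of the variable ordering a variable drawn from an $s$-system of $Dis(M,f)$; by Corollary \ref{c2.1} this choice forces every immediate branch of the diagram to lose at least one variable of $M$ in addition to the top variable itself, which shaves at least one level off every path.

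Since $M$ is a non-empty inseparable subset of $Ess(f)$, the remark following Definition \ref{d2.1} gives $Dis(M,f)\neq\emptyset$, and Theorem \ref{t2.2} produces some $\beta\in Sys(Dis(M,f))$. I would fix any $x_i\in\beta$ and take the variable ordering $\langle i; i_2; \ldots; i_n\rangle$, where $i_2,\ldots,i_n$ is any enumeration of the remaining indices in $Ess(f)$; let $D_f$ be the resulting ODD. Corollary \ref{c2.1} then supplies, for every $c\in Z_k$, a variable of $M$ which is not essential in $f(x_i=c)$, so that $ess(f(x_i=c))\leq n-2$ uniformly in $c$.

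Now I would trace an arbitrary path in $D_f$ from the function node to a terminal node. Its first edge reaches the $x_i$-node; the choice of outgoing branch $c$ sends the path into the subdiagram representing $f(x_i=c)$ under the induced ordering of the remaining variables. By Lemma \ref{l17} only essential variables of $f(x_i=c)$ can appear as non-terminal labels in that subdiagram, and since the fixed variable ordering forbids any variable from being revisited along a path, the portion of the path lying strictly below the $x_i$-node contains at most $ess(f(x_i=c))\leq n-2$ non-terminal nodes, hence at most $n-1$ edges. Adding the initial edge from the function node to the $x_i$-node yields a path of length at most $n$, and therefore $Depth(D_f)\leq n<n+1$, as desired.

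The only real subtlety is the bookkeeping between edges and non-terminal nodes in the definition of depth, plus the observation that the reduction rules used to pass from the ODT to $D_f$ cannot lengthen any path — they only merge terminal nodes, merge non-terminal nodes with identical children, or short-circuit non-terminal nodes whose children coincide. All the combinatorial content of the argument is packaged in Corollary \ref{c2.1}, which provides exactly the extra level of shrinkage needed to undercut the bound $n+1$ of Theorem \ref{l3.2}.
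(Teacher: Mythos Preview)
Your argument is correct and follows essentially the same route as the paper's proof: both choose a variable $x_i$ from an $s$-system $\beta\in Sys(Dis(M,f))$ (existence via Theorem~\ref{t2.2}), start the ordering with $x_i$, and invoke Corollary~\ref{c2.1} to force every branch $f(x_i=c)$ to drop some variable of $M$, so that no root-to-leaf path can visit all $n$ essential variables. Your version is a bit more explicit about the edge bookkeeping and about why the reduction rules and Lemma~\ref{l17} make the subdiagram below $x_i$ an ODD of $f(x_i=c)$, but the core idea is identical.
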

\begin{proof}
Without loss of generality, let us assume that $M=\{x_1,\ldots,x_m\}$, $m<n$. Since $M$ is  inseparable  in $f$, the family $Dis(M,f)$ of the all  distributive sets of $M$ is non-empty. According to Theorem \ref{t2.2} there is a non-empty $s$-system $\beta=\{x_{i_1},\ldots,x_{i_t}\}$ of $Dis(M,f)$. Since $f(x_{i_1}=c_1)\neq f(x_{i_1}=c_2)$ for some $c_1,c_2\in Z_k$ it follows that there exists an ODD $D_f$ for $f$ under a variable ordering  with $x_{i_1}$ as the label of the first non-terminal node of $D_f$. According to Corollary \ref{c2.1} for all $c\in Z_k$ there is a variable $x_j\in M$ which is inessential in $f(x_{i_1}=c)$. Hence, each path of $D_f$ does not contain at least one variable from $M$ among its labels of non-terminal nodes. Hence $Depth(D_f)<n+1$.
\end{proof}

%%%%%%%%%%%%%%%%%%%%%%%
\section{Equivalence Relations and Transformation Groups in $P_k^n$}\label{sec5}

Many of the problems in   applications of the $k$-valued functions are compounded because of the large number of the functions, namely $k^{k^n}$. Techniques which involve enumeration of functions can only be used if $k$ and $n$ are trivially small. A common way for extending the scope of such enumerative methods is to classify the functions into equivalence classes under some natural equivalence relation.

In this section   we  define equivalence relations in $P_k^n$ which classify functions with respect to number of their implementations, subfunctions and separable  sets. We are intended to determine several numerical invariants of the transformation groups generated by these relations. The second goal is to compare these groups with so called classical subgroups of the  Restricted Affine Group(RAG) \cite{lech} which have a variety of applications such as coding theory,  switching theory, multiple output combinational logic, sequential machines and other areas of theoretical and applied computer sciences.

Let us denote by $S_A$ the symmetric group of all permutations of a given no-empty set $A$.  $S_m$  denotes the symmetric group $S_{\{1,\ldots,m\}}$ for a natural number $m$, $m\geq 1$.

Let us define the following three equivalence relations: $\simeq_{imp}$, $\simeq_{sub}$ and $\simeq_{sep}$.
\begin{definition}\label{d5.1} Let  $f,g\in P_k^n$ be two functions.
\begin{enumerate}
\item[(i)] If $ess(f)=ess(g)\leq 1$ then $f\simeq_{imp} g$;
\item[(ii)] Let $ess(f)=n>1$. We say that $f$ is $imp$-equivalent to $g$ (written $f\simeq_{imp} g$) if  there are $\pi\in S_n$ and  $\sigma_i\in S_{Z_k}$ such that $f(x_i=j)\simeq_{imp} g(x_{\pi(i)}=\sigma_i(j))\quad\mbox{for all}\quad   i=1,\ldots,n\quad\mbox{and}\quad j\in Z_k.$
\end{enumerate}
\end{definition}
Hence two functions are ${imp}$-equivalent if they produce same number of  implementations, i.e.
    $imp(f)=imp(g)$ and there are $\pi\in S_n$, and  $\sigma$, $\sigma_i\in S_{Z_k}$ such that 
$(i_1\ldots i_m,c_{1},\ldots,c_{m}c)\in Imp(f)\iff $ $ (\pi(i_1)\ldots \pi(i_m), \sigma_1(c_{1})\ldots\sigma_m(c_{m})\sigma(c))\in Imp(g).$

Table \ref{tb1} shows the classification of Boolean functions of two  variables into four classes, called {\it imp-classes} under the equivalence relation $\simeq_{imp}$. 
The second column shows the number of implementations of the functions from the $imp$-classes given at the first column. The third column presents the number of functions per each $imp$-class.
% \vspace{-.25cm}
 
\begin{table}[h]
\caption{$Imp$-classes in  $P_2^2$.} \label{tb1}
%\vspace{.5cm}
\centering
\begin{tabular}{l|l|l}%\hline
\hline\hline
%Class & Number  \\ 
%&  of implementations\\ %\hline\hline
$[\ 0,\ 1\ ]$ & \ 1& 2\\ \hline
$[\ x_1,\ x_2,\ x_1^0,\ x_2^0\ ]$&\   2& 4\\  \hline
$[\ x_1x_2,\ x_1x_2^0,\ x_1^0x_2,\ x_1^0x_2^0,\ x_1\oplus x_1x_2,$ &&\\
$ x_2^0\oplus x_1x_2,\ x_1^0\oplus x_1x_2,\ x_1^0\oplus x_1x_2^0\ ]$& \ 6 & 8\\  \hline
$[\ x_1\oplus x_2,\ x_1\oplus x_2^0\ ]$&\   8& 2\\ \hline\hline
\end{tabular}
\end{table}
%\vspace{.15cm}

\begin{definition}\label{d5.2} Let  $f,g\in P_k^n$ be two functions.
\begin{enumerate}
\item[(i)] If $ess(f)=ess(g)=0$ then $f\simeq_{sub} g$;
\item[(ii)] If  $ess(f)=ess(g)=1$ then $f\simeq_{sub} g\iff range(f)=range(g)$;
\item[(iii)] Let $ess(f)=n>1$. We say that $f$ is $sub$-equivalent to $g$ (written $f\simeq_{sub} g$) if   $sub_m(f)=sub_m(g)$ 
for all $m=0,1,\ldots, n$.
\end{enumerate}
\end{definition}
It is easy to see that the equivalence relation $\simeq_{sub}$ partitions the algebra of Boolean functions of two variables in the same equivalence classes (called {\it the sub-classes}) as  the relation $\simeq_{imp}$ (see Table \ref{tb1}). 
\begin{definition}\label{d5.3} Let  $f,g\in P_k^n$ be two functions.
\begin{enumerate}
\item[(i)] If $ess(f)=ess(g)\leq 1$ then $f\simeq_{sep} g$;
\item[(ii)]Let $ess(f)=n>1$. We say that $f$ is $sep$-equivalent to $g$ (written $f\simeq_{sep} g$) if   $sep_m(f)=sep_m(g)$
for all $m=1,\ldots, n$.
\end{enumerate}
\end{definition}
The equivalence classes under $\simeq_{sep}$ are called {\it sep-classes}.

 Since $P_k^n$ is a finite algebra of $k$-valued functions each equivalence relation $\simeq$ on $P_k^n$  makes a partition of the algebra in the set of disjoint equivalence classes $Cl(\simeq)=\{P_1^\simeq,\ldots,P_r^\simeq\}$. Then, in the set of all equivalence relations a partial order is defined as follows: $\simeq_1\ \leq\ \simeq_2$ if for each $P\in Cl(\simeq_1)$ there is a $Q\in Cl(\simeq_2)$ such that  $P\subseteq Q$. Thus $\simeq_1\ \leq\ \simeq_2$ if and only if   $f\simeq_1 g\ \Ra f\simeq_2 g$, for  $f,g\in P_k^n$.

 \begin{theorem}\label{t5.1}
 ~~~
\begin{enumerate}
\item[(i)] $ \simeq_{imp}\  \leq\ \simeq_{sep}$; \quad (iii) $\simeq_{imp}\  \not\leq\ \simeq_{sub}$;
\item[(ii)] $ \simeq_{sub}\  \leq\ \simeq_{sep}$; \quad (iv) $ \simeq_{sub}\  \not\leq\ \simeq_{imp}$.
\end{enumerate}

 \end{theorem}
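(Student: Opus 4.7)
By the remark following Definition~\ref{d5.1}, the condition $f\simeq_{imp}g$ unfolds into a bijection $\Phi:Imp(f)\to Imp(g)$ of the form $(i_1\ldots i_m,c_1\ldots c_mc)\mapsto(\pi(i_1)\ldots\pi(i_m),\sigma_1(c_1)\ldots\sigma_m(c_m)\sigma(c))$ for a single $\pi\in S_n$ and permutations $\sigma,\sigma_i\in S_{Z_k}$. By Theorem~\ref{t3.2}, a set $M$ with $|M|=m$ lies in $Sep(f)$ exactly when some implementation of $f$ has its last $m$ indices forming $M$; $\Phi$ sends any such implementation to one in $Imp(g)$ whose last $m$ indices form $\pi(M)$. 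Thus $\pi$ restricts to a bijection between the separable $m$-subsets of $Ess(f)$ and those of $Ess(g)$, giving $sep_m(f)=sep_m(g)$.

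\textbf{Parts (iii) and (iv).} For (iii) I exploit the coarse base clause Definition~\ref{d5.1}(i): take $k=3$ and the unary $3$-valued functions $f(x)=x$ and $g$ defined by $g(0)=0$, $g(1)=g(2)=1$. Both have $ess=1$, so $f\simeq_{imp}g$ by Definition~\ref{d5.1}(i), whereas $range(f)=\{0,1,2\}\ne\{0,1\}=range(g)$, hence $f\not\simeq_{sub}g$ by Definition~\ref{d5.2}(ii). For (iv) take the Boolean OR $f=x_1\vee x_2\vee x_3$ and $g=x_1\vee x_2x_3$ in $P_2^3$. A direct enumeration yields the common sub-profile $(sub_0,\ldots,sub_3)=(2,3,3,1)$, so $f\simeq_{sub}g$. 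However, the multiset $\{ess(f(x_i{=}0)),ess(f(x_i{=}1))\}$ equals $\{2,0\}$ for every $i$, whereas in $g$ it equals $\{2,0\}$ only at $i=1$ and $\{1,2\}$ at $i\in\{2,3\}$. Definition~\ref{d5.1}(ii), together with the inductive fact that $\simeq_{imp}$ preserves $ess$, forces these multisets to agree at each $i$ under any candidate $\pi$; such a $\pi$ would then have to send all of $\{1,2,3\}$ to $1$, which is impossible. Hence $f\not\simeq_{imp}g$.

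\textbf{Part (ii), the main obstacle.} The claim is that the profile $(sub_m(f))_{m=0}^n$ determines $(sep_m(f))_{m=1}^n$. I proceed by induction on $n=ess(f)$; the base case $n\le 1$ and the extreme indices $m\in\{0,n\}$ are immediate from $sub_n=sep_n=1$. For $1\le m<n$, the set of separable $m$-subsets of $Ess(f)$ is the image of the map $\phi_m:\{h\in Sub(f):ess(h)=m\}\to 2^{Ess(f)}$, $h\mapsto Ess(h)$, whose domain has size $sub_m(f)$. The core difficulty is that the fibres of $\phi_m$ can have different cardinalities (on two variables $x_1x_2$ gives singleton fibres while $x_1\oplus x_2$ gives fibres of size $k$), so $sep_m$ is not determined by $sub_m$ alone on a single $f$. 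The plan is to show that equality of the full sub-profile of $f$ and $g$ forces the multisets of fibre sizes of $\phi_m$ to coincide, using Lemma~\ref{l2.1} and Theorem~\ref{t2.4} to propagate the induction hypothesis through each $Sub(h)$ with $h\preceq f$, lifting fibre-size data from lower levels up to level $m$. This propagation step is where the bulk of the technical work concentrates.
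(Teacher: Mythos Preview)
Your treatments of (i), (iii), and (iv) are correct. Part (i) is essentially the paper's argument via Theorem~\ref{t3.2}, packaged without the explicit induction. For (iii) and (iv) you use different counterexamples: the paper stays entirely within $P_2^3$, taking for (iii) the pair $f=x_1^0x_2x_3\oplus x_1x_2^0x_3^0$, $g=x_2x_3\oplus x_1x_2^0x_3\oplus x_1x_2x_3^0$ (six simple subfunctions each, all in the same imp-class, so $imp(f)=imp(g)=36$, while $sub_1(f)=6\neq 3=sub_1(g)$), and for (iv) the pair $f=x_1x_2^0x_3^0\oplus x_1$, $g=x_1x_2x_3$ (same sub-profile $(2,3,3,1)$ but $imp(f)=23\neq 21=imp(g)$). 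Your (iii) is more economical but, relying on $k\geq 3$, does not settle the Boolean case that the paper addresses; your (iv) is a clean variant of the paper's approach, and the ess-multiset invariant you use is a tidy alternative to computing $imp$ outright.

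For (ii) your proposal has a genuine gap: you correctly isolate the obstacle---that equal counts $sub_m$ do not obviously force equal image sizes of $h\mapsto Ess(h)$---and you sketch an inductive fibre-size matching, but you do not carry it out. You should know, however, that the paper does \emph{not} perform anything like this. Its entire argument for (ii) is: since $M\in Sep(f)$ iff $M=Ess(h)$ for some $h\in Sub(f)$, one has $Sub(f)=Sub(g)\Rightarrow Sep(f)=Sep(g)$, ``which implies that $sub_m(f)=sub_m(g)\Rightarrow sep_m(f)=sep_m(g)$.'' The quoted step is exactly the difficulty you flagged---equality of the cardinalities $sub_m$ is much weaker than equality of the sets $Sub$---and the paper supplies no further justification. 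So your instinct that more is needed is sound; the paper's own proof of (ii) is a two-line assertion that leaves the same gap open rather than the detailed analysis you propose.
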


\begin{proof} 
(i)\  
Let $f,g\in P_k^n$ be two ${imp}$-equivalent functions, i.e. $f\simeq_{imp} g$. 
We shall proceed by induction on the number $n=ess(f)$ of essential variables in $f$ and $g$.

Clearly, if $n\leq 1$ then $f\simeq_{sub} g$, which is our inductive basis.
Let us assume that   $f\simeq_{imp} g$ implies $f\simeq_{sep} g$ if $n< r$ for some natural number $r$, $r\geq 2$.

Let $f$ and $g$ be two functions with $f\simeq_{imp}g$ and $ess(f)=ess(g)=r$. Then there are $\pi\in S_r$ and $\sigma_i\in S_{Z_k}$ for $i=1,\ldots,r$ such that $f(x_i=j)\simeq_{imp} g(x_{\pi(i)}=\sigma_i(j))$. Let $M$, $\emptyset\neq M\in Sep(f)$ be a separable set of essential variables in $f$ with $|M|=m$, $1\leq m\leq r$. Theorem \ref{t3.2} implies that there is an implementation 
\[\mathbf{(j,c)}=(j_1\ldots j_{r-m} i_1\ldots i_m, c_{j_1}\ldots c_{j_{r-m}}c_{i_1}\ldots c_{i_m}c)\]
of $f$ obtained  under an ODD  whose variable ordering  finishes with the variables from $M$, i.e. $M=\{x_{i_1}\ldots,x_{i_m}\}$. Then $f(x_{j_1}=c_{j_1})\simeq_{imp} g(x_{\pi(j_1)}=\sigma_{j_1}(c_{j_1}))$ implies that 
\[({\pi({j_1})}\ldots {\pi({j_{r-m}})}{\pi({i_1})}\ldots {\pi({i_m})},\sigma_{j_1}(c_{j_1})\ldots \sigma_{j_{r-m}}(c_{j_{r-m}})\sigma_{i_1}(c_{i_1})\ldots \sigma_{i_m}(c_{i_m})\sigma(c))\]
 is an implementation of $g$, for some $\sigma\in S_{Z_k}$. Again, from Theorem \ref{t3.2} it follows that 
$\pi(M)=\{x_{\pi(i_1)},\ldots,x_{\pi(i_m)}\}\in Sep(g).$  Since $\pi$ is a permutation of $S_r$ it follows that  $sep_m(f)=sep_m(g)$ for $m=1,\ldots,r$ and hence $\simeq_{imp}\ \leq\ \simeq_{sep}.$
 
(ii) \
Definition \ref{d1.3} shows that $M\in Sep(f)$ if and only if there is a subfunction $g\in Sub(f)$ with $g\prec_Q^{\mathbf{c}}f$ where $Q=Ess(f)\setminus M$ and $\mathbf{c}\in Z_k^{n-|M|}$. Hence
\[\forall f,g\in P_k^n,\quad Sub(f)=Sub(g)\implies Sep(f)=Sep(g),\]
which implies that 
$sub_m(f)=sub_m(g)\implies sep_m(f)=sep_m(g)$ and
$\simeq_{sub}\leq \simeq_{sep}$.

(iii)\
Let us consider the functions 
\[f=x_1^0x_2x_3\oplus x_1x_2^0x_3^0\ (mod\ 2)\quad\mbox{and}\quad g=x_2x_3\oplus x_1x_2^0x_3\oplus x_1x_2x_3^0\ (mod\ 2).\]

The set of the all simple subfunctions in $f$ is: $\{x_1x_2^0, 
x_1^0x_2, x_1x_3^0, x_1^0x_3, x_2x_3, x_2^0x_3^0\}$ and in $g$ is: $\{x_1x_2, 
x_1x_3, x_2x_3, x_2\oplus x_1x_2^0, x_3\oplus x_1x_3^0, x_2^0x_3^0\oplus 1\}$.

Hence $f$ and $g$ have six simple subfunctions, which depends essentially on two variables.  Table \ref{tb1} shows that all these subfunctions belong to same $imp$-class and the number of their implementations is $6$. Thus we might  calculate that $imp(f)=imp(g)=36$ and $f\simeq_{imp}g$.

The set of the all subfunctions with one essential variable in the function $f$ is: $\{ x_1, x_2, x_3, x_1^0, x_2^0, x_3^0\}$ and in $g$ is: $\{x_1, x_2, x_3\}$.

Then we have $sub_0(f)=sub_0(g)=2$, $sub_1(f)=6$, $sub_1(g)=3$ and $sub_2(f)=sub_2(g)=6$ and hence $f\not\simeq_{sub}g$.
It is clear that $sub(f)=15$, $sub(g)=12$  and $\simeq_{imp}\  \not\leq\ \simeq_{sub}$.

(iv)\
Let us consider the functions 
\[f=x_1x_2^0x_3^0\oplus x_1\ (mod\ 2)\quad\mbox{and}\quad g= x_1x_2x_3\ (mod\ 2).\]

The simple subfunctions in $f$ and $g$ are:\\ 
\begin{tabular}{lll}
$f(x_1=0)=0,$&$f(x_3=0)=x_1x_2^0\oplus x_1,$&  $g(x_2=0)=0,$ \\
$f(x_1=1)=x_2^0x_3^0\oplus 1,$& $f(x_3=1)=x_1,$ & $g(x_2=1)=x_1x_3,$\\
$f(x_2=0)=x_1x_3^0\oplus x_1,$ & $g(x_1=0)=0,$&$g(x_3=0)=0,$\\
 $f(x_2=1)=x_1,$ & $g(x_1=1)=x_2x_3,$&$g(x_3=1)=x_1x_2$.\\
  \end{tabular} 

Now, using Table \ref{tb1}, one can easily calculate that  $imp(f)=23$ and $imp(g)=21$, and hence $ f  \not\simeq_{imp} g$.
On the other side we have 
$Sub(f)=\{0, 1, x_1, x_2, x_3, x_2^0x_3^0\oplus 1, x_1x_3^0\oplus x_1, x_1x_2^0\oplus x_1, f\}$
and
$Sub(g)=\{0, 1, x_1, x_2, x_3, x_2x_3, x_1x_3, x_1x_2, g\}$
which show that 
$sub_m(f)=sub_m(g)\quad\mbox{for}\quad m=0,1,2,3$ and $f\simeq_{sub} g$.
Hence $ \simeq_{sub}\  \not\leq\ \simeq_{imp}$.
\end{proof}

A {\it  transformation} $\psi:P_k^n\longrightarrow P_k^n$  can be viewed  as  an $n$-tuple of functions
\[\psi=(g_1,\ldots,g_n),\quad g_i\in P_k^n,\quad i=1,\ldots,n\]
acting on any function $f=f(x_1,\ldots,x_n)\in P_k^n$ as follows
$\psi(f)=f(g_1,\ldots,g_n)$.
Then the  composition of two transformations $\psi$ and $\phi=(h_1,\ldots,h_n)$ is defined as follows
\[\psi\phi=(h_1(g_1,\ldots,g_n),\ldots,h_n(g_1,\ldots,g_n)).\]

Thus the set of all transformations of  $P_k^n$ is the {\it universal monoid $\Omega_k^n$} with unity - the identical transformation. When taking only invertible transformations we obtain the {\it universal group} $C_k^n$ isomorphic to the symmetric group $S_{Z_k^n}$.
Throughout this paper  we shall consider   invertible transformation, only. The groups consisting of invertible transformations of $P_k^n$ are called {\it transformation groups}.

Let $\simeq$ be an equivalence relation in $P_k^n$.
A mapping $\varphi:P_k^n\longrightarrow P_k^n$ is called {\it a transformation, preserving $\simeq$} if $f\simeq \varphi(f)$ for all $f\in P_k^n$.
Taking only invertible transformations which preserve $\simeq$, we get the group $G$ of all transformations preserving $\simeq$, whose {\it  orbits} (also called {\it $G$-types})  are the equivalence classes $P_1,\ldots,P_r$ under $\simeq$.
The number of orbits of a group $G$ of transformations in finite algebras of functions is denoted by $t(G)$.

Next, we relate groups to combinatorial problems trough the following obvious, but important definition:
\vspace{.1cm}

\begin{definition}\label{d5.4}~Let $G$ be a transformation group acting on the algebra of functions $P_k^n$and suppose that $f,g\in P_k^n$. We say that $f$ is $G$-equivalent to $g$ (written $f\simeq_G g$) if there exists $\psi\in G$ so that $g=\psi(f)$.
\end{definition}
Clearly, the relation $\simeq_G $ is an equivalence relation. We summarize and extend the results for the "classical" transformation groups, following \cite{har2,lech,str3}, where these notions are used to study classification and enumeration in the algebra of boolean functions. Such groups are induced  under  the following notions of equivalence: complementation and/or permutation of the variables; any linear or affine function of the variables.
Since we want to classify functions from $P_k^n$ into equivalence classes, three natural problems occur.
%\begin{problem}\label{prob5.1}~~~~~

\begin{enumerate}
 \item[$\bullet$]  We ask for the number $t(G)$ of such equivalence classes. This problem will be partially discussed for the family of ``natural'' equivalence relations in the algebra of boolean functions. 
\item[$\bullet$] We ask for the cardinalities of the equivalence classes. This problem is important in applications as functioning the switching gates, circuits etc. For boolean functions of 3 and 4 variables we shall solve these two problems, also concerning  $imp$-, $sub$- and $sep$-classes.
 
\item[$\bullet$] We want to give a method which will decide the class to which an arbitrary function belongs. In some particular cases this problem will be discussed below. We also develop a class of algorithms for counting the complexities $imp$, $sub$ and $sep$ for each boolean function which allow us to classify the algebras $P_2^n$ for $n=2,3,4$ with respect to these complexities as group invariants. 
\end{enumerate}
These problems are very hard and for  $n\geq 5$ they are practically unsolvable.

%%%%%%%%%%%%%%%%%%%%%%%%%%%%%%%%%%%%%%%%%%%%%%%%%%%%%%%%%%%%

We use the denotation $\leq$ also, for order relation ``subgroup''. More precisely, $H\leq G$ if there is a subgroup $G'$ of $G$ which is isomorphic to $H$.

Let us denote by $IM_k^n$, $SB_k^n$ and $SP_k^n$ the transformation groups induced by the equivalence relations $\simeq_{imp}$, $\simeq_{sub}$ and $\simeq_{sep}$, respectively. 

Now, as a direct consequence of Theorem \ref{t5.1} we obtain the following proposition.

\begin{proposition}\label{c5.1}~ 

 \begin{enumerate}\item[(i)] $IM_k^n\leq SP_k^n$; \quad (iii) $IM_k^n\not\leq SB_k^n$; 
\item[(ii)] $SB_k^n\leq SP_k^n$; \quad (iv) $SB_k^n\not\leq IM_k^n$.
\end{enumerate}
\end{proposition}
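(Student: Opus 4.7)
The plan is to deduce Proposition \ref{c5.1} directly from Theorem \ref{t5.1} by translating each statement about refinement of equivalence relations into a statement about the corresponding groups of preserving transformations. Since $IM_k^n$, $SB_k^n$ and $SP_k^n$ are by construction subgroups of the universal group $C_k^n$, I will read $\leq$ between them as concrete inclusion in $C_k^n$ (which, in particular, implies the abstract subgroup relation).

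For (i) and (ii) I would first isolate the general principle: whenever $\simeq_1\ \leq\ \simeq_2$, every invertible $\psi$ preserving $\simeq_1$ also preserves $\simeq_2$, because for all $f \in P_k^n$ one has $f \simeq_1 \psi(f)$ and hence $f \simeq_2 \psi(f)$. Thus the group of transformations preserving $\simeq_1$ is contained in the group preserving $\simeq_2$. Applying this with Theorem \ref{t5.1}(i) gives $IM_k^n \leq SP_k^n$, and with Theorem \ref{t5.1}(ii) gives $SB_k^n \leq SP_k^n$.

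For (iii) and (iv) I would exploit the fact, emphasized in the setup, that the orbits of a preserving group are exactly the classes of its equivalence relation. By Theorem \ref{t5.1}(iii) there exist $f, g \in P_k^n$ with $f \simeq_{imp} g$ but $f \not\simeq_{sub} g$; hence $f$ and $g$ lie in a common $IM_k^n$-orbit, so there is some $\psi \in IM_k^n$ with $\psi(f) = g$. However, $f \not\simeq_{sub} \psi(f)$ means $\psi \notin SB_k^n$, whence $IM_k^n \not\subseteq SB_k^n$. The symmetric argument from Theorem \ref{t5.1}(iv) yields $SB_k^n \not\subseteq IM_k^n$.

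The only real subtlety is interpreting the symbol $\leq$ on the group side: if one insists on the literal ``abstract subgroup up to isomorphism'' reading given earlier, then (iii) and (iv) would demand the extra step of ruling out any embedding, which would require some additional invariant (for example, orbit structure on $P_k^n$ or cardinality). Under the natural concrete reading inside $C_k^n$ that the three groups manifestly share, the orbit-witness argument above is what does all the work, so the proposition really is a direct corollary of Theorem \ref{t5.1}.
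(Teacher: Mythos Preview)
Your argument is correct and matches the paper's own treatment: the paper gives no proof at all beyond the sentence ``as a direct consequence of Theorem \ref{t5.1}'', and what you wrote is precisely the unpacking of that sentence. Your remark about the two readings of $\leq$ is also on point; the paper's literal definition (``isomorphic to a subgroup'') would indeed make (iii) and (iv) a stronger claim, but in context the intended meaning is the concrete containment inside $C_k^n$, under which your orbit-witness argument from Theorem \ref{t5.1}(iii)--(iv) is exactly what is needed.
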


We  deal with "natural" equivalence relations which involve  variables in some functions. Such relations induce permutations on the domain $Z_k^n$ of the functions. These mappings form a transformation group whose number of equivalence classes  can be determined.

The restricted affine group (RAG) is defined as a subgroup of the symmetric group on the direct sum of the vector space $Z_k^n$ of arguments of functions and the vector space $Z_k$ of their outputs. The group RAG permutes the direct sum  $Z_k^n+Z_k$ under restrictions which preserve single-valuedness of all functions from $P_k^n$. The equivalence relation induced by RAG is called {\it prototype equivalence relation}.

In the model of RAG an affine transformation operates on the domain or space of inputs $\mathbf{x}=(x_1,\ldots,x_n)$ to produce the output $\mathbf{y}=\mathbf{xA}\oplus \mathbf{c}$, which might be used as an input in  a function $g$. Its output $g(\mathbf{y})$ together with the function variables $x_1,\ldots,x_n$ are linearly combined by a range transformation which defines the image $f(\mathbf{x})$ as follows:
\begin{equation}\label{eq2}
f(\mathbf{x})=g(\mathbf{y})\oplus a_1x_1\oplus\ldots\oplus a_nx_n\oplus d=g(\mathbf{xA}\oplus \mathbf{c})\oplus \mathbf{a^tx}\oplus d
\end{equation}
where $d$ and $a_i$ for $i=1,\ldots,n$ are constants from $Z_k$.

Such a  transformation belongs to RAG if $\mathbf{A}$ is a non-singular matrix. The name RAG was given to this group by R. Lechner  in 1963 (see \cite{lech1}) and it was studied by Ninomiya (see \cite{nin}) who gave the name "prototype equivalence" to the relation it induces on the function space $P_k^n$. 

We want to extract basic facts about some of the subgroups of RAG which are "neighbourhoods" or "relatives" of our transformation groups - $IM_k^n$, $SB_k^n$ and $SP_k^n$.

First, we consider a group which is called $CA_k^n$ (complement arguments) and each transformation $\mb{j}\in CA_k^n$ is determined by an $n$-tuple from $Z_k^n$, i.e.
$CA_k^n=\{(j_1,\ldots,j_n)\in Z_k^n\}.$
Intuitively, $CA_k^n$ will complement some of the variables of a function. If $\mb{j}=(j_1,\ldots,j_n)$ is in $CA_k^n$, define
$\mb{j}(x_1,\ldots,x_n)=(x_1\oplus j_1,\ldots,x_n\oplus j_n).$
The group operation is sum mod $k$ and written $\oplus$. For example if $n=k=3$ and $\mb{j}=(2,1,0)$ then $\mb{j}(x_1,x_2,x_3)=(x_1\oplus 2,x_2\oplus 1,x_3)$ and $\mb{j}$ induces a permutation on $Z_3^3=\{0,1,2\}^3$. Then the following sequence of images: $\mb{j}: 000\ra 210\ra 120\ra 000$ 
determines the cycle $(0,21,15)$ and if we agree to regard each triple from $Z_3^3$ as a ternary number, then the permutation induced by $\mb{j}$ can be written in cyclic notation as $(0,21,15)(1,22,16)(2,23,17)(3,24,9)(4,25,10)(5,26,11)$ $(6,18,12)(7,19,13)(8,20,14).$
In \cite{har2} M. Harrison showed that  the boolean functions of two variables are grouped into seven classes under the group $CA_2^2$.

Another classification occurs when permuting arguments. If $\pi\in S_n$ then $\pi$ acts on variables by:
$\pi(x_1,\ldots,x_n)=(x_{\pi(1)},\ldots,x_{\pi(n)}).$
Each permutation induces a map on the domain $Z_k^n$. For instance the permutation $\pi=(1,2)$ induces a permutation on $\{0,1,2\}^3$ when considering the algebra $P_3^3$. Then we have $\pi: 010\ra 100 \ra 010$ and in cyclic notation it can be written as
\[(3,9)(4,10)(5,11)(6,18)(7,19)(8,20)(15,21)(16,22)(17,23).\]
$S_k^n$ denotes the transformation group induced by permuting of variables. It is clear that $S_k^n$ is isomorphic to $S_n$.

If we allow both complementations and permutations of the variables, then a transformation group, called $G_k^n$, is induced. The group action on variables is represented by 
$((j_1,\ldots,j_n),\pi)(x_1,\ldots,x_n)=(x_{\pi(1)}\oplus j_1,\ldots, x_{\pi(n)}\oplus j_n)$
where $j_m\in Z_k$ for $1\leq m\leq n$ and $\pi\in S_n$. The group $G_2^n$ is especially important in switching theory and other areas of discrete mathematics, since it is the symmetry group of the $n$-cube. The classification of the boolean functions under $G_2^2$ into six classes is shown in \cite{har2}. 

Let us allow a function to be equivalent to its complements as well as using equivalence under $G_k^n$. Then the transformation group  which is induced by  this equivalence relation is called the {\it genera} of $G_k^n$ and it is denoted by $GE_k^n$. Thus the equivalence relation $\simeq_{gen}$ which induces genera of $G_k^n$ is defined as follows $f\simeq_{gen}g\iff f\simeq_{G_k^n}g$ or $f=g\oplus j$ for some $j\in Z_k$. Then there exist only four equivalence classes in $P_2^2$, induced by  $GE_2^2$. 
 These classes are the same as the classes induced by the group $IM_2^2$ in the algebra $P_2^2$ (see \cite{har2} and Table \ref{tb1}, given above).

Next important classification is generated by equivalence relations which allow adding linear or affine functions of variables. In order to preserve the group property we shall consider invertible linear transformations and assume that $k$ is a prime number such that $LG_k^n$ the general linear group on an $n$-dimensional vector space is over the field $Z_k$. The transformation groups $LG_2^n$ and $A_2^n$ of linear and affine transformations in the algebra of  boolean functions are included in the lattice of the subgroups of RAG. We  extend this view to the functions from $P_k^n$. The algebra of boolean functions in the simplest case of two variables is classified in eight classes under $LG_2^2$ and in five classes under $A_2^2$.  Table \ref{tb1_1} presents both equivalence classes of boolean functions from $P_2^2$ under the transformation group $RAG$.
\begin{table}%[h]
\caption{Classes in  $P_2^2$ under $RAG$.} \label{tb1_1}
%\vspace{.5cm}
\centering
\begin{tabular}{l}%\hline
\hline\hline
%Class & Number  \\ 
%&  of implementations\\ %\hline\hline
$[\ 0,\ 
   1,\ 
 x_1,\ x_2,\ x_1^0,\ x_2^0,\  x_1\oplus x_2,\ x_1\oplus x_2^0\ ]$\\ \hline
$[\ x_1x_2,\ x_1x_2^0,\ x_1^0x_2,\ x_1^0x_2^0,\ 
  x_1\oplus x_1x_2,\ x_2^0\oplus x_1x_2,\
 x_1^0\oplus x_1x_2,\ x_1^0\oplus x_1x_2^0\ ]$\\ \hline\hline
\end{tabular}
\end{table}

The subgroups of RAG defined above are 
determined  by equivalence relations as it is shown in Table \ref{tb2}, where  $\mb{P}$ denotes a permutation matrix, $\mb{I}$ is the identity matrix, $\mb{b\mbox{ and }c}$ are vectors from $Z_k^n$ and $d\in Z_k$.
%\vspace{.15cm}

\begin{table}%[h]
\caption{Subgroups of RAG}\label{tb2}
%\vspace{.5cm}
\begin{tabular}{||l|l|l||}\hline\hline
Subgroup& Equivalence relations& Determination\\ \hline
RAG & Prototype equivalence& $\mb{A}$-non-singular\\ 
$GE_k^n$ & genus & $\mb{A}=\mb{P}$, $\mb{a}=\mb{0}$\\ 
 $CF_k^n$ & complement function & $\mb{A}=\mb{I}$, $\mb{a}=\mb{0}$,     $\mb{c}=\mb{0}$\\ 
  $A_k^n$ &affine transformation  &  $\mb{a}=\mb{0}$, $d=0$\\ 
  $G_k^n$ & permute \& complement & \\
  & variables  (symmetry types) & $\mb{A}=\mb{P}$, $\mb{a}=\mb{0}$, $d=0$\\
  $LF_k^n$ & add linear function & $\mb{A=I}$, $\mb{c=0}$, $d=0$\\
$CA_k^n$ & complement arguments & $\mb{A}=\mb{I}$, $\mb{a}=\mb{0}$, $d=0$\\ 
 $LG_k^n$ & linear transformation & $\mb{c}=\mb{0}$, $\mb{a}=\mb{0}$, $d=0$\\
  $S_k^n$ & permute variables & $\mb{A}=\mb{P}$, $\mb{c}=\mb{0}$, $\mb{a}=\mb{0}$, $d=0$\\ \hline\hline
\end{tabular}
\end{table}
It is naturally to ask which subgroups of RAG are subgroups of the groups  $IM_k^n$ or $SB_k^n$. The answer of this question is our next goal. 
\begin{example}\label{ex5.1}
Let 
$f=x_1x_2^0x_3\oplus x_1^0\quad\mbox{and}\quad g=x_1x_2^0x_3\oplus x_1x_2$ be two boolean functions.
Then 
\[sub_1(f)=sub_1(g)=3,\ sub_2(f)=sub(g)=3\quad\mbox{and}\quad sub_3(f)=sub_3(g)=1.\]
Hence  $f\simeq_{sub} g$.
In a similar way, it can be shown that $f\simeq_{imp} g$.
 The details are left to the reader. 

On the other side, one can prove  that there is no  transformation $\varphi\in RAG$ such that $\varphi(x_1^0)=x_1x_2$ (see Table \ref{tb1_1}) and hence there is  no affine transformation $\varphi\in RAG$  for which $g=\varphi(f)$.

Consequently, each group among  $IM_k^n$, $SB_k^n$ and $SP_k^n$ can not be a subgroup of $RAG$.
\end{example}
%%%%%%%%%%%%%%%%%%%%%%%%%%%%%%%%%%
 Table \ref{tb2} allows us to establish  the following fact.
\begin{fact}\label{fc2}
If $f$ and $g$ satisfy  (\ref{eq2}) with $\mb{A\notin \{0,P,I\}}$ or $\mb{a\neq 0}$ then $f\not\simeq_{imp} g$, $f\not\simeq_{sub} g$ and $f\not\simeq_{sep} g$.
\end{fact}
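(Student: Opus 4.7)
The plan hinges on the observation that each of $\simeq_{imp}$, $\simeq_{sub}$, $\simeq_{sep}$ implies $ess(f)=ess(g)$. For $\simeq_{imp}$, the permutation $\pi\in S_n$ in Definition \ref{d5.1}(ii) is a bijection on $\{1,\ldots,n\}$ with $n=ess(f)$, which forces $Ess(g)=\pi(Ess(f))$ to have size $n$. For $\simeq_{sub}$ and $\simeq_{sep}$, since $f\in Sub(f)$ and $Ess(f)\in Sep(f)$, one has $sub_{n}(f)\geq 1$ and $sep_{n}(f)\geq 1$; matching these with the corresponding counts for $g$ (and noting that no subfunction or separable set can have more essential variables than its host) forces $ess(g)\geq n$, and symmetry yields equality. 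Hence the Fact reduces to exhibiting, for each RAG transformation with $\mathbf{A}\notin\{0,P,I\}$ or $\mathbf{a}\neq\mathbf{0}$, a single $g\in P_k^n$ whose image $f$ under (\ref{eq2}) satisfies $ess(f)\neq ess(g)$.

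The construction of $g$ splits into two cases. If $\mathbf{a}\neq\mathbf{0}$, take $g\equiv 0$; then $f(\mathbf{x})=\mathbf{a}^t\mathbf{x}\oplus d$ is a nonconstant affine function, and $x_i$ is essential in $f$ for every $i$ with $a_i\neq 0$, giving $ess(f)\geq 1>0=ess(g)$. Otherwise $\mathbf{A}\notin\{0,P,I\}$ with $\mathbf{a}=\mathbf{0}$. If $\mathbf{A}=\mathbf{0}$, take $g=x_1$; then $f=g(\mathbf{c})\oplus d=c_1\oplus d$ is constant, so $ess(f)=0<1=ess(g)$. If $\mathbf{A}$ is nonzero but not a (generalized) permutation matrix, some column $i$ of $\mathbf{A}$ must carry at least two nonzero entries at rows $j_1,j_2$; choosing $g(\mathbf{y})=y_i$ yields $f(\mathbf{x})=A_{1i}x_1\oplus\ldots\oplus A_{ni}x_n\oplus c_i\oplus d$, in which both $x_{j_1}$ and $x_{j_2}$ are essential, so $ess(f)\geq 2>1=ess(g)$. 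In every case $ess(f)\neq ess(g)$, and the three desired non-equivalences follow at once.

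The main delicacy is settling what ``$P$'' denotes in the statement. For $k=2$ the standard permutation matrices coincide with the matrices having exactly one nonzero entry per row and column, and the argument above runs cleanly. For $k\geq 3$ one must read $P$ as any generalized permutation matrix (entries from $Z_k\setminus\{0\}$, one per row and column); otherwise a matrix such as $\mathbf{A}=2\mathbf{I}$ in $Z_3$ would formally fall under $\mathbf{A}\notin\{0,P,I\}$ while still inducing only a relabeling of values in each coordinate, keeping $ess(f)=ess(g)$ and making the Fact literally false. Under that natural interpretation the column-structure argument of Case~2 goes through uniformly, and the three non-equivalences $f\not\simeq_{imp}g$, $f\not\simeq_{sub}g$, $f\not\simeq_{sep}g$ all follow from the opening $ess$-preservation observation.
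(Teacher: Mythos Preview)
Your reduction step---``Hence the Fact reduces to exhibiting, for each RAG transformation \ldots, a single $g$ whose image $f$ satisfies $ess(f)\neq ess(g)$''---is not a valid reduction. The Fact, read literally, asserts that \emph{every} pair $(f,g)$ linked by such a transformation is inequivalent; exhibiting \emph{one} such pair proves only that the transformation fails to lie in $IM_k^n$, $SB_k^n$, $SP_k^n$, which is strictly weaker. In fact the literal statement is false: with $k=n=2$, $\mathbf{A}=\bigl(\begin{smallmatrix}1&1\\0&1\end{smallmatrix}\bigr)$ (nonsingular, not in $\{\mathbf{0},\mathbf{P},\mathbf{I}\}$), $\mathbf{a}=\mathbf{c}=\mathbf{0}$, $d=0$, and $g=x_1x_2$, one obtains $f(x_1,x_2)=g(x_1,x_1\oplus x_2)=x_1\oplus x_1x_2=x_1x_2^0$. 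Table~\ref{tb1} places $x_1x_2$ and $x_1x_2^0$ in the same $imp$-class (hence also the same $sub$- and $sep$-class), so $f\simeq_{imp}g$, $f\simeq_{sub}g$, and $f\simeq_{sep}g$ despite $\mathbf{A}\notin\{\mathbf{0},\mathbf{P},\mathbf{I}\}$.

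What your argument \emph{does} establish is the statement the paper actually needs for Proposition~\ref{p5.1}: every RAG transformation with $\mathbf{A}\notin\{\mathbf{0},\mathbf{P},\mathbf{I}\}$ or $\mathbf{a}\neq\mathbf{0}$ moves at least one function out of its $imp$-, $sub$-, and $sep$-class, and therefore does not belong to $IM_k^n$, $SB_k^n$, or $SP_k^n$. The paper itself offers no proof beyond the remark that Table~\ref{tb2} ``allows us to establish'' the Fact, so your construction is a genuine contribution once the conclusion is phrased correctly; but you should flag that Fact~\ref{fc2} as written overshoots, and recast it as a statement about transformations rather than about arbitrary pairs $(f,g)$.
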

%%%%%%%%%%%%%%%%%%%%%%%%%%%%%%
\begin{proposition}\label{p5.1}~~
\begin{enumerate}
\item[(i)] $LG_k^n\not\leq SP_k^n$;\quad (ii) $LF_k^n\not\leq SP_k^n$;
\item[(iii)] $IM_k^n\not\leq RAG$;\quad (iv) $SB_k^n\not\leq RAG$.
\end{enumerate}
\end{proposition}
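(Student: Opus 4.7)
The plan is to split the four assertions. For (i) and (ii) I will exhibit a single transformation in $LG_k^n$ (respectively $LF_k^n$) that fails to preserve $\simeq_{sep}$, and so cannot belong to $SP_k^n$. For (iii) and (iv) I will extract the required transformation from the concrete pair already constructed in Example~\ref{ex5.1}.

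For (i), fix $n\ge 2$ and choose a non-singular matrix $\mathbf{A}\in Z_k^{n\times n}$ that is neither $\mathbf{I}$ nor a permutation matrix---for example, the elementary matrix that differs from $\mathbf{I}$ only in the single entry $A_{12}=1$. The associated transformation $\psi_\mathbf{A}\colon f\mapsto f(\mathbf{xA})$ belongs to $LG_k^n$ by Table~\ref{tb2}. Applied to a suitable projection $f=x_i$ it produces a non-trivial sum of two variables, so $ess$ jumps from $1$ to $2$ and $f\not\simeq_{sep}\psi_\mathbf{A}(f)$ by Definition~\ref{d5.3} (equivalently, Fact~\ref{fc2} applies with $\mathbf{A}\notin\{\mathbf 0,\mathbf P,\mathbf I\}$). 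For (ii), the group $LF_k^n$ consists of the maps $\psi_\mathbf{a}\colon f\mapsto f\oplus\mathbf{a}^t\mathbf{x}$ with $\mathbf{a}\in Z_k^n$; taking $\mathbf{a}=(1,0,\ldots,0)^t$ and $f\equiv 0$ gives $\psi_\mathbf{a}(f)=x_1$, so the essential-variable count changes from $0$ to $1$ and again $\psi_\mathbf{a}\notin SP_k^n$ (consistent with Fact~\ref{fc2}).

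For (iii) and (iv), let $f=x_1x_2^0x_3\oplus x_1^0$ and $g=x_1x_2^0x_3\oplus x_1x_2$ be the boolean functions from Example~\ref{ex5.1}. That example already establishes both $f\simeq_{imp}g$ and $f\simeq_{sub}g$, yet shows that no $\varphi\in RAG$ satisfies $\varphi(f)=g$. Since the orbits of the transformation groups $IM_k^n$ and $SB_k^n$ on $P_k^n$ coincide with the $\simeq_{imp}$- and $\simeq_{sub}$-classes respectively, there exist $\psi_1\in IM_k^n$ and $\psi_2\in SB_k^n$ with $\psi_1(f)=\psi_2(f)=g$. If either $\psi_i$ lay in $RAG$, it would contradict Example~\ref{ex5.1}; hence $IM_k^n\not\subseteq RAG$ and $SB_k^n\not\subseteq RAG$.

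The only non-routine point is interpretive rather than technical: the paper's formal definition of $\le$ between groups is via isomorphic embedding, but Proposition~\ref{c5.1} derived from Theorem~\ref{t5.1} makes clear that $\le$ is in fact being applied in the concrete sense of set-inclusion inside the universal group $C_k^n$. Under that reading all four negations reduce to producing a single counterexample transformation on the larger side, which is exactly what the constructions above deliver.
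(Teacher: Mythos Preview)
Your proof is correct and follows the same route as the paper, whose entire proof reads ``Immediate from Fact~\ref{fc2} and Example~\ref{ex5.1}.'' You have simply spelled this out: your explicit counterexample transformations for (i) and (ii) are instances of the general assertion in Fact~\ref{fc2} (and arguably more robust, since Fact~\ref{fc2} as stated universally over all $f,g$ is stronger than needed and not obviously true), and your orbit argument for (iii) and (iv) is exactly what Example~\ref{ex5.1} was set up to deliver. Your closing remark about the intended reading of $\leq$ as set-inclusion inside $C_k^n$ rather than abstract isomorphic embedding is a legitimate clarification that the paper leaves implicit.
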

\begin{proof}
Immediate from Fact \ref{fc2} and Example \ref{ex5.1}.
\end{proof}
Let $\sigma:Z_k\longrightarrow Z_k$ be a mapping and $\psi_\sigma:P_k^n\longrightarrow P_k^n$ be a transformation of $P_k^n$ determined by $\sigma$ as follows
$\psi_\sigma(f)(\mathbf{a})=\sigma(f(\mathbf{a}))$
for all $\mathbf{a}=(a_1,\ldots,a_n)\in Z_k^n$.

\begin{theorem}\label{t5.2}  $\psi_\sigma\in IM_k^n$ and $\psi_\sigma\in SB_k^n$ if and only if $\sigma$ is a permutation of $Z_k$, $k>2$.
\end{theorem}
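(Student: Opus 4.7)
The plan is to handle each direction of the biconditional separately, exploiting the fact that $\psi_\sigma$ merely relabels output values via $\sigma$ and therefore commutes transparently with subfunction formation and the recursive definition of $\simeq_{imp}$.

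For the ($\Rightarrow$) direction, I would use that $IM_k^n$ and $SB_k^n$ are by definition groups of \emph{invertible} transformations, so $\psi_\sigma$ must be bijective on $P_k^n$. Applying $\psi_\sigma$ to a constant function $c\in Z_k$ (viewed as an element of $P_k^n$) produces the constant function $\sigma(c)$, so any two distinct constants $c_1\neq c_2$ with $\sigma(c_1)=\sigma(c_2)$ would yield equal images, contradicting injectivity of $\psi_\sigma$. Thus $\sigma$ is injective on $Z_k$, and by finiteness of $Z_k$ it is a permutation.

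For the ($\Leftarrow$) direction, assume $\sigma\in S_{Z_k}$. Then $\psi_\sigma$ is invertible with inverse $\psi_{\sigma^{-1}}$, since $\sigma^{-1}\circ\sigma\circ f=f$. The cornerstone identity is
\[
\psi_\sigma(f)(x_i=j)=\psi_\sigma\bigl(f(x_i=j)\bigr),
\]
so output relabelling commutes with taking simple subfunctions. Moreover, because $\sigma$ is a bijection, $f(\ldots,a_i,\ldots)\neq f(\ldots,b,\ldots)$ if and only if $\sigma(f(\ldots,a_i,\ldots))\neq \sigma(f(\ldots,b,\ldots))$, which gives $Ess(\psi_\sigma(h))=Ess(h)$ for every $h\in P_k^n$. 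Using these two observations I would prove $f\simeq_{imp}\psi_\sigma(f)$ by induction on $ess(f)$: the base case $ess(f)\leq 1$ is Definition~\ref{d5.1}(i), and for the inductive step I choose $\pi$ and each $\sigma_i$ in Definition~\ref{d5.1}(ii) to be the identity, reducing the required condition to $f(x_i=j)\simeq_{imp}\psi_\sigma(f(x_i=j))$, which holds by the induction hypothesis since $ess(f(x_i=j))<ess(f)$ whenever $x_i$ is essential.

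For $\simeq_{sub}$, the same cornerstone identity together with invertibility of $\psi_\sigma$ shows that $g\mapsto\psi_\sigma(g)$ is a bijection $Sub(f)\to Sub(\psi_\sigma(f))$ preserving the number of essential variables. Hence $sub_m(f)=sub_m(\psi_\sigma(f))$ for every $m$, which by Definition~\ref{d5.2} yields $f\simeq_{sub}\psi_\sigma(f)$. The only conceptual point that must be handled carefully is the invariance of the essential-variable set under $\psi_\sigma$; this is a one-line consequence of $\sigma$ being a bijection, and everything else is a routine induction based on the identity above.
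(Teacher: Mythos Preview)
Your proof is correct, and the $(\Leftarrow)$ direction is essentially identical to the paper's: both run an induction on $ess(f)$ using the identity $\psi_\sigma(f)(x_i=j)=\psi_\sigma(f(x_i=j))$, and your explicit bijection $Sub(f)\to Sub(\psi_\sigma(f))$ is just a cleaner way of packaging the same inductive content for $\simeq_{sub}$.

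The $(\Rightarrow)$ direction, however, follows a genuinely different route. You argue via invertibility: since $IM_k^n$ and $SB_k^n$ are by definition groups of invertible transformations, $\psi_\sigma$ must be injective on $P_k^n$, and already on the constant functions this forces $\sigma$ to be injective. The paper instead shows that $\psi_\sigma$ fails the \emph{preservation} condition: it exhibits an explicit function $f$ (taking value $a_1$ at a single point and $a_2$ elsewhere, where $\sigma(a_1)=\sigma(a_2)$) with $ess(f)=n$ but $ess(\psi_\sigma(f))=0$, so that $f\not\simeq_{imp}\psi_\sigma(f)$ and $f\not\simeq_{sub}\psi_\sigma(f)$. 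Your argument is shorter and avoids any construction, while the paper's argument proves a slightly stronger statement --- that $\psi_\sigma$ does not even preserve the equivalence relation, regardless of whether one insists on invertibility in the definition of the transformation group.
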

\begin{proof} "$\La$" Let $\sigma\in S_{Z_k}$ be a permutation of $Z_k$ and let $f$ be an arbitrary function with $ess(f)=n\geq 0$. We shall proceed by induction on $n$, the number of essential variables in $f$.

If $n=0$ then clearly $\psi_\sigma(f)$ is a constant and hence $f\simeq_{imp} \psi_\sigma(f)$ and $f\simeq_{sub} \psi_\sigma(f)$.

Assume that if $n<p$ then $f\simeq_{imp} \psi_\sigma(f)$ and $f\simeq_{sub} \psi_\sigma(f)$  for some natural number $p, p>0$. Hence $f(x_i=j)\simeq_{imp}\psi_\sigma(f(x_i=j))$ and $sub_m(f(x_i=j))=sub_m(\psi_\sigma(f(x_i=j)))$ for all $i\in\{1,\ldots,n\}$, $m\in\{1,\ldots,n-1\}$  and $j\in Z_k$.

Let $n=p$. Let  $x_i\in\{x_1,\ldots,x_n\}=Ess(f)$  and $j\in Z_k$, and let us set $g=f(x_i=j)$. Then $\psi_\sigma(g)=\psi_\sigma(f(x_i=j)$ and $ess(g)=n-1<p$. Hence our inductive assumption implies
$g\simeq_{imp}\psi_\sigma(g)$ and $g\simeq_{sub}\psi_\sigma(g)$. Consequently, we have
\[f(x_i=j)\simeq_{imp}\psi_\sigma(f(x_i=j))
\quad\mbox{and}\quad sub_m(f(x_i=j))=sub_m(\psi_\sigma(f(x_i=j)))\]
for all  $x_i\in\{x_1,\ldots,x_n\}$ and $j\in Z_k$, which shows that $f\simeq_{imp}\psi_\sigma(f)$ and $f\simeq_{sub}\psi_\sigma(f)$.

"$\Ra$" Let us assume that $\sigma$ is not a permutation of $Z_k$. Hence there exist two constants $a_1$ and $a_2$ from $Z_k$ such that $a_1\neq a_2$ and $\sigma(a_1)=\sigma(a_2)$. Let us fix the vector $\mathbf{b}=(b_1,\ldots,b_n)\in Z_k^n$. Then we define the following function from $P_k^n$:
\[
f(x_1,\ldots,x_n)=\left\{\begin{array}{ccc}
            a_1 \  &\  if \  &\  x_i=b_i\ for\ i=1,\dots,n \\
            a_2 &   & otherwise.
           \end{array}
           \right.
\]
Clearly, $Ess(f)=X_n$. On the other hand the range of $f$ is $range(f)=\{a_1,a_2\}$ and $\sigma(range(f))=\{\sigma(a_1)\}$, which implies that
$\psi_\sigma(f)(c_1,\ldots,c_n)=\sigma(a_1)$
for all $(c_1,\ldots,c_n)\in Z_k^n$. Hence $\psi_\sigma(f)$ is the constant $\sigma(a_1)\in Z_k$ and  $Ess(\psi_\sigma(f))=\emptyset$.
Thus we have $f\not\simeq_{imp} \psi_\sigma(f)$ and $f\not\simeq_{sub} \psi_\sigma(f)$. 
\end{proof}

\begin{theorem}\label{t5.3}
Let $\pi\in S_n$ and $\sigma_i\in S_{Z_k}$  for $i=1,\ldots,n$. Then 
$f(x_1,\ldots,x_n)\simeq_{imp} f(\sigma_1(x_{\pi(1)}),\ldots,\sigma_n(x_{\pi(n)}))$
and
$f(x_1,\ldots,x_n)\simeq_{sub} f(\sigma_1(x_{\pi(1)}),\ldots,\sigma_n(x_{\pi(n)}))$.
\end{theorem}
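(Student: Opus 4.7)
My plan is to prove both $\simeq_{imp}$ and $\simeq_{sub}$ simultaneously by induction on $n=ess(f)$, exploiting the fact that the transformation $T:P_k^n\to P_k^n$ defined by $T(h)(x_1,\ldots,x_n)=h(\sigma_1(x_{\pi(1)}),\ldots,\sigma_n(x_{\pi(n)}))$ is invertible (since each $\sigma_i$ and $\pi$ is a permutation), with explicit inverse $T^{-1}(h)(y_1,\ldots,y_n)=h(\sigma_{\pi^{-1}(1)}^{-1}(y_{\pi^{-1}(1)}),\ldots,\sigma_{\pi^{-1}(n)}^{-1}(y_{\pi^{-1}(n)}))$. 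Set $g=T(f)$. The base case $ess(f)\leq 1$ is immediate: if $n=0$ then $g=f$; if $n=1$ then $\pi$ is trivial and $g(x_1)=f(\sigma_1(x_1))$, which has the same range as $f$, so $f\simeq_{imp}g$ and $f\simeq_{sub}g$ directly from Definitions \ref{d5.1}(i) and \ref{d5.2}(ii).

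For the inductive step with $n\geq 2$, I first observe that $Ess(T(h))=\pi(Ess(h))$ for every $h\in P_k^n$: a variable $x_j$ occurs in $T(h)$ only inside $\sigma_{\pi^{-1}(j)}(x_j)$, and since $\sigma_{\pi^{-1}(j)}$ is a permutation, $x_j$ is essential in $T(h)$ iff the $\pi^{-1}(j)$-th argument is essential in $h$. In particular $ess(g)=ess(f)=n$. For $\simeq_{imp}$, I verify Definition \ref{d5.1}(ii) by choosing $\rho:=\pi$ and $\tau_i:=\sigma_i^{-1}$. A direct substitution shows
\[
g(x_{\pi(i)}=\sigma_i^{-1}(j))(x_1,\ldots,\widehat{x_{\pi(i)}},\ldots,x_n)=h\bigl(\sigma_1(x_{\pi(1)}),\ldots,\widehat{\sigma_i(\cdot)},\ldots,\sigma_n(x_{\pi(n)})\bigr),
\]
where $h=f(x_i=j)$ and the hat denotes omission of the $i$-th slot. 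Thus $g(x_{\pi(i)}=\sigma_i^{-1}(j))$ is obtained from $h$ by a substitution of exactly the form covered by the theorem (with the restricted permutation $\pi|_{\{1,\ldots,n\}\setminus\{i\}}$ on the remaining indices and the corresponding $\sigma_l$ for $l\neq i$). Since $ess(h)\leq n-1$, the inductive hypothesis yields $f(x_i=j)\simeq_{imp}g(x_{\pi(i)}=\sigma_i^{-1}(j))$ for every $i$ and $j$, which by Definition \ref{d5.1}(ii) gives $f\simeq_{imp}g$.

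For $\simeq_{sub}$, I bypass the induction and argue directly: the bijection $T$ restricts to a bijection $Sub(f)\to Sub(g)$ that preserves $ess$. If $h=f(x_{i_1}=c_1,\ldots,x_{i_r}=c_r)\in Sub(f)$, a straightforward computation shows
\[
T(h)=g\bigl(x_{\pi(i_1)}=\sigma_{i_1}^{-1}(c_1),\ldots,x_{\pi(i_r)}=\sigma_{i_r}^{-1}(c_r)\bigr)\in Sub(g),
\]
and symmetrically $T^{-1}$ sends $Sub(g)$ into $Sub(f)$. Combined with $ess(T(h))=ess(h)$ from the preceding paragraph, this gives $sub_m(f)=sub_m(g)$ for all $m=0,1,\ldots,n$, hence $f\simeq_{sub}g$ by Definition \ref{d5.2}(iii).

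The main obstacle I anticipate is the bookkeeping that translates a substitution $x_i=j$ in $f$ into the matching substitution $x_{\pi(i)}=\sigma_i^{-1}(j)$ in $g$, and verifying that after this matching the residual function is again of the ``permute-and-relabel'' form so that induction applies cleanly. Once the indexing $\rho=\pi$, $\tau_i=\sigma_i^{-1}$ is identified, the remaining verification is mechanical, and the same coordinate calculation simultaneously produces the explicit bijection on subfunctions needed for the $\simeq_{sub}$ part.
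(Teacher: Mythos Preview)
Your proposal is correct. For $\simeq_{imp}$ you follow essentially the paper's inductive strategy on $ess(f)$; the only organizational difference is that you handle the permutation $\pi$ and the coordinate permutations $\sigma_i$ in a single combined transformation $T$, whereas the paper factors the argument into two stages (first $f\simeq_{imp} f(x_{\pi(1)},\ldots,x_{\pi(n)})$ by induction, then $f\simeq_{imp} f(\sigma_1(x_1),\ldots,\sigma_n(x_n))$ by inspecting implementation strings directly). Your choice $\rho=\pi$, $\tau_i=\sigma_i^{-1}$ and the check that the residual substitution is again of permute--and--relabel type is exactly the bookkeeping the paper carries out implicitly in its first stage.

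For $\simeq_{sub}$ your route genuinely differs. The paper runs the same induction in parallel, obtaining $sub_m(f(x_i=c))=sub_m(g(x_{\pi(i)}=c))$ for all $i,c,m$ and then concluding $f\simeq_{sub}g$; that last step tacitly relies on an actual correspondence of subfunctions rather than mere equality of counts, since $Sub(f)=\{f\}\cup\bigcup_{i,c}Sub(f(x_i=c))$ has overlaps. Your explicit bijection $T\colon Sub(f)\to Sub(g)$, together with $ess(T(h))=ess(h)$, makes this correspondence visible and yields $sub_m(f)=sub_m(g)$ in one stroke, which is cleaner and sidesteps the induction entirely. The paper's factored approach has the minor compensating advantage that its second stage (the $\sigma_i$ alone) can be read off the implementation strings without recursion.
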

\begin{proof}
Let $f\in P_k^n$ be an arbitrary function and assume $Ess(f)=X_n$. 

First, we shall prove that 
\[f(x_1,\ldots,x_n)\simeq_{imp} f(x_{\pi(1)},\ldots,x_{\pi(n)})\] {and} \[f(x_1,\ldots,x_n)\simeq_{sub} f(x_{\pi(1)},\ldots,x_{\pi(n)}).\]
Let $g=f(x_{\pi(1)},\ldots,x_{\pi(n)})$.
Clearly, if $n\leq 1$ then $f\simeq_{imp} g$ and $f\simeq_{sub} g$.
Assume that if $n<p$ then $f\simeq_{imp} g$ and $f\simeq_{sub} g$ for some natural number $p$, $p\geq 1$.

 Let us suppose $n=p$. Let $x_i\in Ess(f)$  be an arbitrary  essential variable in $f$ and let $c\in Z_k$ be an arbitrary constant from $Z_k$.  Then we have 
\[f(x_i=c)(x_1,\ldots,x_{i-1},x_{i+1},\ldots,x_p)=\] 
 \[=g(x_{\pi^{-1}(i)}=c)(x_{\pi^{-1}(1)},\ldots,x_{\pi^{-1}({i-1})},x_{\pi^{-1}({i+1})},\ldots,x_{\pi^{-1}(p)}).\]
  Our inductive assumption implies
$f(x_i=c)\simeq_{imp}g(x_{\pi(i)}=c)$
 {and}  $sub_m(f(x_i=c))=sub_m(g(x_{\pi(i)}=c))$ for all $x_i\in X_n$, $m\in\{1,\ldots,p-1\}$ and $c\in Z_k$. Hence
$f\simeq_{imp}g
\ \mbox{and}\  f\simeq_{sub}g$.

Second, let us prove that 
\[f(x_1,\ldots,x_n)\simeq_{imp} f(\sigma_1(x_{1}),\ldots,\sigma_n(x_{n}))\]
 {and}
\[f(x_1,\ldots,x_n)\simeq_{sub} f(\sigma_1(x_{1}),\ldots,\sigma_n(x_{n})).\]
Let $h=f(\sigma_1(x_{1}),\ldots,\sigma_n(x_{n}))$. Then we have 
\[f(a_1,\ldots,a_n)=h(\sigma_1^{-1}(a_1),\ldots,\sigma_n^{-1}(a_n)).\]
Hence, if $(i_1\ldots i_r, a_{i_1}\ldots a_{i_r}c)\in Imp(f)$ then $(i_1\ldots i_r, \sigma_{i_1}^{-1}(a_{i_1})\ldots \sigma_{i_r}^{-1}(a_{i_r})c)\in Imp(h)$ for some $r$, $1\leq r\leq n$.
Since $\sigma_i$ is a permutation of $Z_k$ for $i=1,\ldots,n$ it follows that 
$f\simeq_{imp}h$. By similar arguments it follows that 
$f\simeq_{sub}h$.
\end{proof}

\begin{corollary}\label{c5.4} (i)  $GE_k^n\leq IM_k^n$; \quad (ii) $GE_k^n\leq SB_k^n$; \quad
(iii) $GE_k^n\leq SP_k^n$. 
\end{corollary}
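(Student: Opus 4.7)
The plan is to reduce the corollary to two results already established: Theorem \ref{t5.3}, which handles the action of $G_k^n$ on the arguments, and Theorem \ref{t5.2}, which handles shifts of the output value. A generic element of $GE_k^n$ acts on $f\in P_k^n$ as a composition of (a) a $G_k^n$-transformation sending $f(x_1,\ldots,x_n)$ to $f(x_{\pi(1)}\oplus j_1,\ldots,x_{\pi(n)}\oplus j_n)$ with $\pi\in S_n$ and $j_i\in Z_k$, and (b) an output complementation $f\mapsto f\oplus j$. So it suffices to show that each of these two generators lies in $IM_k^n\cap SB_k^n$.

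First I would treat the $G_k^n$-generator. Setting $\sigma_i\colon Z_k\to Z_k$ by $\sigma_i(x)=x\oplus j_i$, each $\sigma_i$ is a permutation of $Z_k$, so Theorem \ref{t5.3} applies directly and yields
\[
f(x_1,\ldots,x_n)\simeq_{imp} f(\sigma_1(x_{\pi(1)}),\ldots,\sigma_n(x_{\pi(n)}))
\]
together with the analogous $\simeq_{sub}$ statement. Consequently every $G_k^n$-transformation lies in both $IM_k^n$ and $SB_k^n$.

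Second, for the output shift $f\mapsto f\oplus j$, take the permutation $\sigma\in S_{Z_k}$ defined by $\sigma(y)=y\oplus j$. The induced transformation $\psi_\sigma$ satisfies $\psi_\sigma(f)(\mathbf{a})=\sigma(f(\mathbf{a}))=f(\mathbf{a})\oplus j$, which is precisely the output complementation. Since $\sigma$ is a permutation of $Z_k$, Theorem \ref{t5.2} gives $\psi_\sigma\in IM_k^n\cap SB_k^n$. Combining this with the previous paragraph, every element of $GE_k^n$ preserves both $\simeq_{imp}$ and $\simeq_{sub}$, establishing (i) and (ii). Finally, (iii) is immediate by transitivity of ``$\leq$'': Proposition \ref{c5.1}(i) gives $IM_k^n\leq SP_k^n$, so $GE_k^n\leq IM_k^n\leq SP_k^n$.

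There is no serious obstacle; the corollary is essentially a bookkeeping composition of Theorems \ref{t5.2}, \ref{t5.3} and Proposition \ref{c5.1}. The only point that deserves a line of justification is the $k=2$ case, since Theorem \ref{t5.2} is stated for $k>2$; but then $S_{Z_2}=\{\mathrm{id},\,y\mapsto y\oplus 1\}$ and one checks directly that boolean complementation preserves the sets $Sub(f)$, $Sep(f)$ and $Imp(f)$ up to relabelling of terminal nodes, which leaves the counts $sub_m$, $sep_m$ and $imp$ unchanged.
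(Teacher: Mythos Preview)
Your argument is correct and matches the paper's intended route: Corollary~\ref{c5.4} is stated without proof precisely because it is meant to be read off from Theorems~\ref{t5.2} and~\ref{t5.3} (with part (iii) then following from Proposition~\ref{c5.1}(i)), exactly as you do. Your extra remark on the $k=2$ case is a welcome clarification, since the ``$\Leftarrow$'' direction of Theorem~\ref{t5.2}'s proof does not actually use the hypothesis $k>2$.
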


\section{Classification of Boolean Functions}\label{sec6}

In this section we compare a collection  of subgroups of   RAG    with the groups of transformations preserving the relations $\simeq_{imp}$, $\simeq_{sub}$ and $\simeq_{sep}$  and to obtain estimations for  the number of equivalence classes, and for the cardinalities of these classes in the algebra of Boolean functions. 
Our results are based on Proposition \ref{p5.1}, Theorem \ref{t5.2} and Theorem \ref{t5.3}. Thus we have 
\begin{equation}\label{eq3}
GE_2^n\leq IM_2^n,\quad GE_2^n\leq SB_2^n, \quad LG_2^n\not\leq SP_2^n \quad\mbox{and}\quad LF_2^n\not\leq SP_2^n.
\end{equation}

These relationships determine the places of the groups $IM_2^n$, $SB_2^n$ and $SP_2^n$ with respect to the subgroups of RAG. Figure \ref{f4} shows the location of  these groups together with the subgroups of RAG.

M. Harrison \cite{har2} and R. Lechner  \cite{lech}   counted  the number of equivalence classes and the cardinalities of the classes  under some transformation subgroups of RAG  for Boolean functions of 3 and 4 variables. 

The relations  (\ref{eq3}) show that if we have the values of $t(GE_2^n)$ then we can count the numbers $t(IM_2^n)$, $t(SB_2^n)$ and $t(SP_2^n)$ because the equivalence classes under these transformation groups are union of equivalence classes under $GE_2^n$ and hence we have $t(IM_2^n)\leq t(GE_2^n)$ and $t(SB_2^n)\leq t(GE_2^n)$. Moreover, if we know the factor-set $P_2^n/_{\simeq_{gen}}$ of representative functions  under $\simeq_{gen}$ then  we can effectively calculate the sets  $P_2^n/_{\simeq_{imp}}$,  $P_2^n/_{\simeq_{sub}}$ and  $P_2^n/_{\simeq_{sep}}$ because of  $P_2^n/_{\simeq_{imp}}\subseteq P_2^n/_{\simeq_{gen}}$ and  $P_2^n/_{\simeq_{sub}}\subseteq P_2^n/_{\simeq_{gen}}$.

The next theorem allows us to count the number $imp(f)$ of the implementations of any function $f$ by a recursive procedure. Such a procedure is realized and its execution is used when calculating the number of the implementations and classifying  the functions under the equivalence $\simeq_{imp}$.
\begin{theorem} \label{t35} Let $f\in P_2^n$ be a boolean function. The number of all  implementations in $f$   is determined as follows:
\[imp(f) = \left\{\begin{array}{ccc}
            1 \  &\  if \  &\ ess(f)=0 \\
            &&\\
          2 & if\  & ess(f)=1\\
           &&\\
           \sum_{x\in Ess(f)}[imp(f(x=0)) + imp(f(x=1))] & if\ & ess(f)\geq 2.
           \end{array}
           \right.
\]
\end{theorem}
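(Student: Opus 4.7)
The plan is to establish the recursion via a bijection between $Imp(f)$ and a disjoint union of the implementation sets of the simple subfunctions $f(x=c)$, for $x\in Ess(f)$ and $c\in\{0,1\}$. The two base cases are direct. When $ess(f)=0$ the function $f$ equals some constant $c$, and every ODD of $f$ reduces to the function node joined to a single terminal, giving the unique implementation $(\varepsilon,c)$. When $ess(f)=1$, say $Ess(f)=\{x\}$, the reduced ODD has a single internal node labelled $x$ whose two outgoing edges lead to the terminals $f(x=0)$ and $f(x=1)$, producing exactly two distinct implementations, one for each value of $x$.

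For $ess(f)=n\geq 2$ I would define
\[
\phi\colon Imp(f)\longrightarrow \bigsqcup_{x\in Ess(f)}\;\bigsqcup_{c\in\{0,1\}} Imp(f(x=c))
\]
by splitting off the first letter pair. Given $(\mathbf{i},\mathbf{c})=(i_1 i_2\ldots i_r,\,c_1 c_2\ldots c_r\,c)\in Imp(f)$ one has $r\geq 1$ since $f$ is non-constant; set $x:=x_{i_1}$ and $c':=c_1$, and map $(\mathbf{i},\mathbf{c})$ to the pair $(i_2\ldots i_r,\,c_2\ldots c_r\,c)$ in the $(x,c')$-component of the target. This is well-defined: by Lemma \ref{l17}, $x_{i_1}\in Ess(f)$, and the suffix is an implementation of $f(x=c')$ because the subtree rooted at the $c'$-child of the node labelled $x_{i_1}$ in the ambient ODD is itself an ODD of $f(x=c')$. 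Different components of the codomain are automatically disjoint, since the $(x,c)$-component is determined by the first letter pair that was stripped off.

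Injectivity of $\phi$ is immediate by reversing the splitting. For surjectivity, fix $x\in Ess(f)$, $c\in\{0,1\}$ and an implementation $(\mathbf{j},\mathbf{d})\in Imp(f(x=c))$ coming from some ODD $D_{f(x=c)}$. I would construct a concrete ODD of $f$ by placing $x$ as the first non-terminal node below the function node, attaching $D_{f(x=c)}$ below its $c$-edge and any ODD of $f(x=1-c)$ below its $(1-c)$-edge; prepending the pair $(i_x,c)$, where $i_x$ is the index of $x$, to $(\mathbf{j},\mathbf{d})$ then yields an implementation of $f$ whose image under $\phi$ is precisely $(\mathbf{j},\mathbf{d})$ in the prescribed component. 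Combining these properties yields the bijection and therefore the summation identity. The main potential obstacle is verifying that the freshly built diagram is a legitimate member of $DD(f)$, but this is guaranteed because $DD(f)$ ranges over all variable orderings of the essential variables and $x\in Ess(f)$ may legitimately head such an ordering.
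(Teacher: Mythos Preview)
Your proposal is correct and follows essentially the same approach as the paper: both arguments establish the recursion by setting up a one-to-one correspondence between implementations of $f$ starting with the pair $(x_i,c)$ and implementations of the simple subfunction $f(x_i=c)$, obtained by stripping (respectively prepending) the first variable--value pair. The paper casts this as an induction on $ess(f)$ but the inductive hypothesis is not actually used in the bijection step, so your direct bijection is just a slightly cleaner packaging of the same idea; the only minor looseness is that in your surjectivity construction the diagram attached on the $(1-c)$-side is not arbitrary but is determined by the chosen variable ordering, though this does not affect the argument.
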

\begin{proof}
We shall proceed by induction on $n=ess(f)$ - the number of essential variables in $f$. 
The lemma is clear if $ess(f)=0$.  If $f$ depends essentially on one variable $x_1$, then there is a unique BDD of $f$ with one non-terminal node which has two outcoming edges. These edges together with the labels of the corresponding terminal nodes form the set $Imp(f)$ of all implementations of $f$, i.e. $imp(f)=2$.

Let us assume that 
\[imp(f)= \sum_{i=1}^n[imp(f(x_i=0)) + imp(f(x_i=1))]\] 
if $n< s$ for some natural number $s$, $1\leq s$.

Next, let us consider a function $f$ with $ess(f)=s$. Without loss of generality, assume that $Ess(f)=\{x_1,\ldots,x_n\}$ with $n=s$.
Since $x_i\in Ess(f)$ for $i=1,\ldots,n$ it follows that $f(x_i=0)\neq f(x_i=1)$ and there exist BDDs of $f$ whose label of the first non-terminal node is $x_i$. Let $D_f$ be a such BDD of $f$ and let $(ij_2\ldots j_m,c_1c_2\ldots c_mc)\in Imp(f)$ with $m\leq n$. Hence 
\[(j_2\ldots j_m,c_2\ldots c_mc)\in Imp(g)\] where $g=f(x_i=c_1)$. On the other side it is clear that if 
$(j_2\ldots j_m,d_2\ldots d_md)\in Imp(g)$ then $(ij_2\ldots j_m,c_1d_2\ldots d_md)\in Imp(f)$. Consequently, there is an one-to-one mapping between the set of implementations of $f$ with first variable $x_i$ and first edge labelled by $c_1$, and $Imp(g)$, which completes the proof.
\end{proof}
We  also develop recursive algorithms to count $sub_m(f)$ and $sep_m(f)$ for  $f\in P_2^n$, presented below.

Table \ref{tb4} shows the number of equivalence classes under the equivalence relations induced by the transformation groups $G_2^n$, $IM_2^n$, $SB_2^n$ and $SP_2^n$ for $n=1,2,3,4$.  M. Harrison  found from applying Polya's counting theorem (see \cite{har2}) the numbers $t(G_2^5)$  and $t(G_2^6)$,  which are upper bounds of $t(IM_2^n)$, $t(SB_2^n)$ and $t(SP_2^n)$ for $n=5,6$. 

Figure \ref{f4} and Table \ref{tb3} show that  for the algebra $P_2^3$ there are only 14 different generic equivalent classes, 13 imp-classes, 11 sub-classes and 5 sep-classes. Hence three mappings that converts each generic class into an imp-class, into a sub-class and into a sep-class are required.  Each generic class is a different row of Table \ref{tb3}. For example, the generic class \textnumero  12 (as it is numbered in Table VIII, \cite{lech}) is presented by 10-th row of Table \ref{tb3}. It consists of 8 functions obtained by complementing function $f$ and/or permuting and/or complementing input variables in all possible ways, where $f=x_1x_2^0x_3\oplus x_1x_2x_3^0\oplus x_2x_3$.  This generic class \textnumero  12 is included in imp-class \textnumero 9, sub-class \textnumero  8 and sep-class \textnumero  5 which shows that $imp(f)=36$, $sub(f)=12$ and $sep(f)=7$. The average cardinalities of equivalence classes and complexities of functions are also shown in the last row of Table \ref{tb3}.

Table \ref{tb5} shows the $sep$-classes of boolean functions depending on at most five variables. Note that there are $2^{32}=4294967296$ functions in $P_2^5$. All calculations were performed on a computer with two Intel Xeon E5/2.3 GHz CPUs. The execution with total exhaustion  took 244 hours.

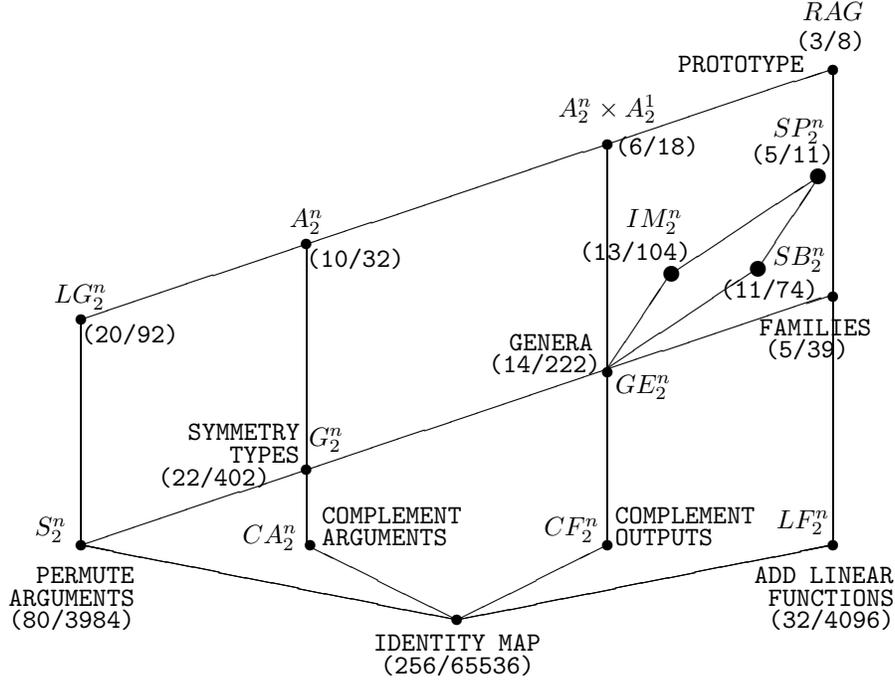
\begin{figure}%[h]
\centering
\unitlength 1mm % = 2.845pt
\linethickness{0.4pt}
\begin{picture}(120.006,88.001)(0,0)

\put(60,8){\line(5,1){50}}
\put(60,8){\line(-5,1){50}}

\put(60,8){\line(2,1){20}}
\put(60,8){\line(-2,1){20}}

\put(60,8){\line(5,1){50}}
\put(60,8){\line(-5,1){50}}

\put(10,18){\line(0,1){30}}
\put(110,18){\line(0,1){63.5}}

\put(40,18){\line(0,1){40}}
\put(80,18){\line(0,1){53.5}}

\put(10,18){\line(3,1){100}}

\put(10,48){\line(3,1){100}}

\put(10,18){\circle*{1.5}}
\put(6, 20){\makebox(0,0)[cc]{$S_2^n$}}
\put(17, 14){\makebox(0,0)[rr]{\ttfamily PERMUTE}}
\put(17, 11){\makebox(0,0)[rr]{\ttfamily  ARGUMENTS}}
\put(17, 8){\makebox(0,0)[rr]{\ttfamily (80/3984)}}

\put(40.5,18){\circle*{1.5}}
\put(42, 22){\makebox(0,0)[ll]{\ttfamily COMPLEMENT}}
\put(42, 19){\makebox(0,0)[ll]{\ttfamily ARGUMENTS}}
\put(31.5, 19){\makebox(0,0)[ll]{$CA_2^n$}}

\put(10,48){\circle*{1.5}}
\put(10,51){\makebox(0,0)[cc]{\ttfamily $LG_2^n$}}
\put(10,46){\makebox(0,0)[ll]{\ttfamily (20/92)}}

\put(40,58){\circle*{1.5}}
\put(40,61){\makebox(0,0)[cc]{\ttfamily $A_2^n$}}
\put(40,56){\makebox(0,0)[ll]{\ttfamily (10/32)}}

\put(80,71.25){\circle*{1.5}}
\put(80,76){\makebox(0,0)[cc]{\ttfamily $A_2^n\times A_2^1$}}
\put(81,71){\makebox(0,0)[ll]{\ttfamily (6/18)}}

\put(110,81.25){\circle*{1.5}}
\put(110,89){\makebox(0,0)[cc]{$RAG$}}
\put(105,85){\makebox(0,0)[ll]{\ttfamily (3/8)}}
\put(106, 82){\makebox(0,0)[rr]{\ttfamily PROTOTYPE}}

\put(60,8){\circle*{1.5}}
\put(60, 5){\makebox(0,0)[cc]{\ttfamily IDENTITY MAP}}
\put(60, 2){\makebox(0,0)[cc]{\ttfamily (256/65536)}}

\put(40,28){\circle*{1.5}}
\put(39, 33){\makebox(0,0)[rr]{\ttfamily SYMMETRY}}
\put(39, 30){\makebox(0,0)[rr]{\ttfamily  TYPES}}
\put(45, 32){\makebox(0,0)[rr]{ $G_2^n$}}
\put (35,27){\makebox(0,0)[rr]{\ttfamily (22/402)}}

\put(80,18){\circle*{1.5}}
\put(81, 22){\makebox(0,0)[ll]{\ttfamily COMPLEMENT}}
\put(81, 19){\makebox(0,0)[ll]{\ttfamily OUTPUTS}}
%\put(80.5, 17){\makebox(0,0)[ll]{\ttfamily (128/32768)}}
\put(79, 20){\makebox(0,0)[rr]{ $CF_2^n$}}

\put(80, 41){\circle*{1.5}}
\put(78, 45){\makebox(0,0)[rr]{\ttfamily GENERA}}
\put (79,42){\makebox(0,0)[rr]{\ttfamily (14/222)}}
\put (88.5,39){\makebox(0,0)[rr]{$GE_2^n$}}

\put(110,51){\circle*{1.5}}
\put(115, 47){\makebox(0,0)[rr]{\ttfamily FAMILIES}}
\put (112,44){\makebox(0,0)[rr]{\ttfamily (5/39)}}

\put(110,18){\circle*{1.5}}
\put(102.5,21){\makebox(0,0)[ll]{\ttfamily  $LF_2^n$}}
\put(118, 14){\makebox(0,0)[rr]{\ttfamily ADD LINEAR}}
\put(118, 11){\makebox(0,0)[rr]{\ttfamily   FUNCTIONS}}
\put (118,8){\makebox(0,0)[rr]{\ttfamily (32/4096)}}

\put(80,41.5){\line(2,3){8}}
%\put(90.5,57){\line(3,1){28}}
\put(80,41.5){\line(3,2){20}}

\put(100,54.75){\line(2,3){8}}

\put(88.5,54){\line(3,2){20}}
\put(108,67){\circle*{2}}
\put(88.5,54){\circle*{2}}
\put(90,61){\makebox(0,0)[rr]{\ttfamily  $IM_2^n$}}
%\put (87,52){\makebox(0,0)[rr]{\ttfamily PAIRS}}
\put (91,57){\makebox(0,0)[rr]{\ttfamily (13/104)}}

\put(100,54.75){\circle*{2}}

\put(102,56){\makebox(0,0)[ll]{\ttfamily  $SB_2^n$}}
\put (95,52){\makebox(0,0)[ll]{\ttfamily (11/74)}}

\put(102,73){\makebox(0,0)[ll]{\ttfamily  $SP_2^n$}}
\put(99,70){\makebox(0,0)[ll]{\ttfamily  (5/11)}}

\end{picture}

%\vspace{.7cm}

\caption{Transformation groups in $P_2^n$ ($n=3/n=4$)}\label{f4}

\end{figure}

\begin{table}%[h]
\centering
\caption{Number of classes under $symmetry$ $type$, $\simeq_{imp}$, $\simeq_{sub}$ and $\simeq_{sep}$}\label{tb4}

\begin{tabular}{rrrrr}
$n$ & $t(G_2^n)$ & $t(IM_2^n)$ &$t(SB_2^n)$ &$t(SP_2^n)$  \\ \hline\hline
1&3&2&2&2\\ %\hline
2&6&4&4&3 \\
3&22&13&11&5 \\
4&402&104&74&11 \\
5&1\ 228\ 158&1606&{$<$ 1228158}& 38 \\
6&400\ 507\ 806\ 843\ 728&\multicolumn{3}{c}{$<$ 400\ 507\ 806\ 843\ 728} \\
\hline\hline
\end{tabular}
\end{table}

\begin{sidewaystable}[p]\centering

\vspace{4in}

\caption{Classification of  $P_2^3$ under $\simeq_{sep}$, $\simeq_{sub}$, $\simeq_{imp}$ and genus.} \label{tb3}

%\small
\begin{tabular}{||c|c|c||c|c|c||c|c|c||c|c||c||} \hline
\hline
sep- 	 & 	$sep(f)$	 & 	func.    & 	sub-  & 	$sub(f)$	 & func. & imp-	 & 	$imp(f)$	 & 	func.& Generic & func. & 	representative		\\
class 	 & 	         	 & per 	  & class 		 & 	       	 & 	per 	 & 	class		 & 	         	 & per &class \cite{lech}& per & 	function	 $f$       \\
\textnumero &&class&\textnumero &&class&\textnumero &&class&\textnumero  &class&\\
  \hline\hline
1	 & 	0	 & 	2		 & 	1	 & 	1	 & 	2	 & 	1	 & 	1	 & 	2&1&2  & 	$0$	 \\ \hline
2	 & 	1	 & 	6		 & 	2	 & 	3	 & 	6	 & 	2	 & 	2	 & 	6&9&6& 	$x_1$	 	\\ \hline
\multirow{2}{*}{3}	 & \multirow{2}{*}{3}	 & 	\multirow{2}{*}{30}		 & 	3	 & 	5	 & 	24	 & 	3	 & 	6	 & 	24&3&24& 	$x_1x_2$	 	 \\ \cline{4-12}
	 & 		 & 			 & 	4	 & 	7	 & 	6	 & 	4	 & 	8	 & 	6&10&6& 	$x_1\oplus x_2$	\\ \hline
4	 & 	6	 & 	24	 & 	5	 & 	11	 & 	24	 & 	5	 & 	28	 & 	24&13&24& 	$x_1\oplus x_1x_3\oplus$ 		\\ 
  	 &  	 & 	 	  	 & 	 & 	 	 & 	 	 & 	 	 & 		 & 	 &&& 	$ x_2x_3$	 \\
 \hline
\multirow{9}{*}{5}	 & \multirow{9}{*}{7}	 & \multirow{9}{*}{194}	 	 & 	\multirow{2}{*}{6}	 & 	\multirow{2}{*}{9}	 & \multirow{2}{*}{64}	 & 	6	 & 	21	 & 	16&2&16 & 	$x_1x_2x_3$	\\ \cline{7-12}
	 & 		 & 		 & 		 & 		 & 		 & 	7	 & 	23	 & 	48&6&48& 	$x_1x_2^0x_3^0\oplus x_1$		 \\ \cline{4-12}
	 & 		 & 			 & 	7	 & 	12	 & 	48	 & 	8	 & 	30	 & 	48&7&48 & 	$x_1x_2^0x_3^0\oplus $	 \\  & 		 & 			 	 & 		 & 	 	 & 		 & 		 & 		 & 	& & & 	$x_2x_3$ \\ \cline{4-12}
	 & 		 & 			 & 	8	 & 	12	 & 	8	 & 	\multirow{2}{*}{}	 & 	\multirow{2}{*}{}	 & 	\multirow{2}{*}{}&12&8& 	$x_1x_2^0x_3\oplus$	 \\ 	 & 		 & 		  	 & 	 	 & 	 	 & 	 	 & 	\multirow{2}{*}{9}	 & 	\multirow{2}{*}{36}	 & 	\multirow{2}{*}{16}&& & 	$ x_1x_2x_3^0 \oplus x_2x_3$	 \\ \cline{4-6} \cline{10-12}
	 & 		 & 			 & 	\multirow{3}{*}{9}	 & 	\multirow{3}{*}{15}	 & 	\multirow{3}{*}{26}	 & 		 & 		 & &5&8& 	$x_1^0x_2x_3\oplus x_1x_2^0x_3^0$			\\ \cline{7-12}  
	 & 		 & 		 & 		 & 		 & 		 & 	10	 & 	42	 & 	16&8&16& 	$x_1x_2^0x_3\oplus $	 \\ 

 & 		 & 			 & 	 	 & 		 & 		 & 		 & 		 & 		 & 	&& 	$ x_1x_2x_3^0\oplus x_1^0x_2x_3$ \\ 		 
	% \cline{4-5}
	  \cline{7-12}
	 & 		 & 			 & 		 & 		 & 		 & 	11	 & 	48	 & 	2& 11&2& 	$x_1\oplus x_2\oplus x_3$	\\ \cline{4-12}
	 & 		 & 		 	 & 	10	 & 	13	 & 	24	 & 	12	 & 	32	 & 	24&14&24 & 	$x_1\oplus x_2x_3$	 \\ \cline{4-12}
	 & 		 & 			 & 	11	 & 	13	 & 	24	 & 	13	 & 	33	 & 	24&4&24 & 	$x_1x_2^0x_3\oplus x_1x_2x_3^0$	\\ \hline
aver.&6.2&51.2&&10.6&23.3& &26.0&19.7&&18.3&\\ \hline\hline
\end{tabular}
\end{sidewaystable}
 
\begin {table}
\centering
\caption{Classes in  $P_2^5$ under $\simeq_{sep}$ }\label{tb5}
%\small
\noindent
\begin{tabular}{||c|c|c|c|c|c|c|c||} \hline\hline

sep- & \multirow{3}{*}{\rotatebox{00}{$sep_5(f)$}} & \multirow{3}{*}{\rotatebox{0}{$sep_4(f)$}} & \multirow{3}{*}{\rotatebox{0}{$sep_3(f)$}} & \multirow{3}{*}{\rotatebox{0}{$sep_2(f)$}} & \multirow{3}{*}{\rotatebox{0}{$sep_1(f)$}} & \multirow{3}{*}{\rotatebox{90}{$sep(f)$}} & functions \\
class &  &   &   &   &   &   & per   \\
\textnumero & &   &   &   &   &   &  class \\ \hline

1 & 0 & 0 & 0 & 0 & 0 & 0 & 2\\ \hline
2 & 0 & 0 & 0 & 0 & 1 & 1 & 10\\ \hline
3 & 0 & 0 & 0 & 1 & 2 & 3 & 100\\ \hline
4 & 0 & 0 & 1 & 2 & 3 & 6 & 240\\ \hline
5 & 0 & 0 & 1 & 3 & 3 & 7 & 1940\\ \hline
6 & 0 & 1 & 2 & 5 & 4 & 12 & 1920\\ \hline
7 & 0 & 1 & 3 & 4 & 4 & 12 & 2400\\ \hline
8 & 0 & 1 & 3 & 5 & 4 & 13 & 8160\\ \hline
9 & 0 & 1 & 4 & 4 & 4 & 13 & 120\\ \hline
10 & 0 & 1 & 4 & 5 & 4 & 14 & 8400\\ \hline
11 & 0 & 1 & 4 & 6 & 4 & 15 & 301970\\ \hline
12 & 1 & 2 & 7 & 9 & 5 & 24 & 20480\\ \hline
13 & 1 & 3 & 5 & 7 & 5 & 21 & 3840\\ \hline
14 & 1 & 3 & 5 & 8 & 5 & 22 & 9600\\ \hline
15 & 1 & 3 & 6 & 6 & 5 & 21 & 1920\\ \hline
16 & 1 & 3 & 6 & 7 & 5 & 22 & 1920\\ \hline
17 & 1 & 3 & 6 & 8 & 5 & 23 & 38400\\ \hline
18 & 1 & 3 & 7 & 7 & 5 & 23 & 1920\\ \hline
19 & 1 & 3 & 7 & 8 & 5 & 24 & 38400\\ \hline
20 & 1 & 3 & 7 & 9 & 5 & 25 & 130560\\ \hline
21 & 1 & 4 & 6 & 6 & 5 & 22 & 3000\\ \hline
22 & 1 & 4 & 7 & 7 & 5 & 24 & 34720\\ \hline
23 & 1 & 4 & 7 & 8 & 5 & 25 & 177120\\ \hline
24 & 1 & 4 & 7 & 9 & 5 & 26 & 274560\\ \hline
25 & 1 & 4 & 8 & 7 & 5 & 25 & 7680\\ \hline
26 & 1 & 4 & 8 & 8 & 5 & 26 & 274560\\ \hline
27 & 1 & 4 & 8 & 9 & 5 & 27 & 1847280\\ \hline
28 & 1 & 5 & 7 & 9 & 5 & 27 & 81920\\ \hline
29 & 1 & 5 & 8 & 8 & 5 & 27 & 600\\ \hline
30 & 1 & 5 & 8 & 9 & 5 & 28 & 1013760\\ \hline
31 & 1 & 5 & 8 & 10 & 5 & 29 & 38400\\ \hline
32 & 1 & 5 & 9 & 7 & 5 & 27 & 1200\\ \hline
33 & 1 & 5 & 9 & 8 & 5 & 28 & 449040\\ \hline
34 & 1 & 5 & 9 & 9 & 5 & 29 & 4093200\\ \hline
35 & 1 & 5 & 9 & 10 & 5 & 30 & 5443200\\ \hline
36 & 1 & 5 & 10 & 8 & 5 & 29 & 13680\\ \hline
37 & 1 & 5 & 10 & 9 & 5 & 30 & 5826160\\ \hline
38 & 1 & 5 & 10 & 10 & 5 & 31 & 4274814914\\ \hline\hline
\end{tabular}
\end{table}
\vspace{3.5cm}

\newpage

\end{document}